\DeclarePairedDelimiter\floor{\lfloor}{\rfloor}
\DeclareMathAlphabet{\mathpzc}{OT1}{pzc}{m}{it}
\DeclareMathOperator*{\argmax}{\arg\!\max}
\newtheorem{lemma}{Lemma}
\newtheorem{theorem}{\textbf{\textsc{Theorem}}}
\begin{document}
	
\title{Optimal Beam Association for High Mobility mmWave Vehicular Networks: Lightweight Parallel Reinforcement Learning Approach}

\author{
	Nguyen Van Huynh, Diep N. Nguyen, Dinh Thai Hoang, and Eryk Dutkiewicz
\thanks{The authors are with the School of Electrical and Data Engineering, University of Technology Sydney, Sydney, NSW 2007, Australia. E-mails: huynh.nguyenvan@student.uts.edu.au, \{Diep.Nguyen, Hoang.Dinh, and Eryk.Dutkiewicz\}@uts.edu.au.}
\thanks{Preliminary results in this paper were presented at the IEEE GLOBECOM Conference, 2020~\cite{Globecom}.}
}
\maketitle
\thispagestyle{empty}
\begin{abstract}
In intelligent transportation systems (ITS), vehicles are expected to feature with advanced applications and services which demand ultra-high data rates and low-latency communications. For that, the millimeter wave (mmWave) communication has been emerging as a very promising solution. However, incorporating the mmWave into ITS is particularly challenging due to the high mobility of vehicles and the inherent sensitivity of mmWave beams to dynamic blockages. This article addresses these problems by developing an optimal beam association framework for mmWave vehicular networks under high mobility. Specifically, we use the semi-Markov decision process to capture the dynamics and uncertainty of the environment. The Q-learning algorithm is then often used to find the optimal policy. However, Q-learning is notorious for its slow-convergence. Instead of adopting deep reinforcement learning structures (like most works in the literature), we leverage the fact that there are usually multiple vehicles on the road to speed up the learning process. To that end, we develop a lightweight yet very effective parallel Q-learning algorithm to quickly obtain the optimal policy by simultaneously learning from various vehicles. Extensive simulations demonstrate that our proposed solution can increase the data rate by 47\% and reduce the disconnection probability by 29\% compared to other solutions.
\end{abstract}

\begin{IEEEkeywords}
millimeter wave, vehicular networks, high mobility, handover, beam selection, parallel Q-learning.
\end{IEEEkeywords}

\section{Introduction}
Over the past few years, the explosive growth of interest in intelligent transportation and vehicular communications offers a great potential to enhance traffic efficiency, improve road safety, and enable open disruptive entertainment services and autonomous driving~\cite{Va2016Millimeter},~\cite{Yan2019Channel}. These applications often require low-latency, high reliability, and especially multi-Gbps network access. For instance, Google's self-driving car in a second can generate up to $750$ MB of data~\cite{AngelicaGoogle}. It is expected that a vehicle may produce $1$ Terabyte of data in a single trip~\cite{Va2016Millimeter}. To address this critical problem, the emerging millimeter wave (mmWave) communication has been recently considered as a very promising solution~\cite{Wang2018MillimeterSurvey}. Comparing with existing wireless networks, the mmWave technology operates at much higher carrier frequencies, i.e., from 30 GHz to 300 GHz. Thus, it possesses much more abundant spectrum resources, resulting in potentially extremely high data rates and low-latency communications. Nevertheless, in mmWave communications, the temporal degradation of channel quality occurs much more frequently than conventional (lower frequency) communications due to high propagation attenuation, selective directivity, and severe susceptibility to blockages, especially in vehicular communications under high mobility. This work aims to address these problems to enable mmWave communications under high mobility.

\subsection{Related Work and Motivation}
Various solutions in the literature have been proposed to address the above inherent limitations of mmWave communications, e.g., \cite{Mismar2020Deep}-\cite{Sohrabi2017Hybrid}. In~\cite{Mismar2020Deep}, the authors aimed to jointly optimize the transmit power, interference coordination, and beamforming to maximize the signal to interference plus noise ratio and the sum-rate. In~\cite{Semiari2017Joint}, the authors aimed to enhance the users' quality-of-experience by scheduling communication tasks on both microwave and mmWave bands. Similarly, a new protocol that enables simultaneous connections to conventional 4G cells and 5G mmWave cells is introduced in~\cite{Polese2017Improved}. Differently, a new adaptive beamforming strategy is proposed in~\cite{Chiu2019Active} to improve the system communication rate by taking the noise into account and selecting the beamforming vectors based on the posterior of the angle-of-arrival.

The aforementioned solutions and most others in the literature did not account for the high mobility of vehicles and dynamics of the environment. In~\cite{Garcia2016Location}, the authors proposed an adaptive channel estimation mechanism for beamforming with the aid of location information. The position information is then demonstrated as an important factor to greatly improve the initial association of vehicles to the infrastructure. Similarly, in~\cite{Va2016Beam}, the authors introduced a beam switching mechanism in mmWave vehicle-to-infrastructure communications. In~\cite{Alrabeiah2020Deep}, a deep learning approach is proposed to predict mmWave beams and blockages with sub-6 GHz channels. Specifically, the authors studied that there are deterministic mapping from uplink sub-6 GHz channels to downlink mmWave channels. Hence, with the uplink channel knowledge, it is possible to predict/infer mmWave beams and blockages with low training overhead because sub-6 GHz channels can be estimated by using a small number of pilots. Nevertheless, it is very challenging to derive and analyze these mapping functions. To tackle this, the authors developed a deep learning model with a sufficiently large deep neural network to efficiently predict the optimal beams and blockages status.

To deal with the inherent dynamic and uncertainty of mmWave links' quality, the authors in~\cite{Sim2018An} proposed an online learning algorithm to obtain the optimal beam selection policy based on the prior environment information. This problem is first modeled as a contextual multi-armed bandit problem. Then, the learning algorithm is developed to guide the mmWave base station to select an optimal subset of beams for vehicles. In~\cite{Scalabrin2018Beam}, the authors aimed to maximize the number of bits delivered from a base station to a mobile user by considering the trade-off between directive data transmission (DT) and directional beam training (BT) in mmWave vehicular networks. Specifically, the system is formulated as a partially observable Markov decision process with the system state defined as the position of the mobile user within the road link. Note that in both~\cite{Sim2018An} and~\cite{Scalabrin2018Beam}, only one mmWave base station is considered. In practice where multiple base stations are often in place, they can cooperate in the beam association and handoff process to achieve a globally optimal beam selection solution. An online learning approach is proposed in~\cite{Hussain2020Adaptive} to jointly optimize beam training, data transmission, and handover processes. Nevertheless, the effect of high mobility on the system performance has not been studied. Moreover, similar to~\cite{Scalabrin2018Beam}, only one mobile user is considered in this paper. In~\cite{Khan2019Reinforcement}, the authors considered the vehicle-cell association problem for mmWave vehicular networks to maximize the average rate of vehicles. Specifically, the authors first formulated the association problem as a discrete non-convex optimization problem. Then, a learning algorithm is developed to estimate the solution for the non-convex optimization problem. It is worth noting that the above works and others in the literature have not accounted for the effects of the high mobility and blockage on beam association/handover. In addition, these learning approaches do not leverage the fact that there are multiple vehicles running on the road at the same time to improve the learning efficiency.

Given the above, this work aims to develop an optimal beam association framework for mmWave vehicular communications under the high mobility of vehicles and the uncertainty of blockages. In particular, to capture the dynamics of blockages, channel quality, and mobility, we first model the problem as a semi-Markov decision process (SMDP). The Q-learning (QL) algorithm is then often adopted to solve the optimization problem the underlying SMDP. Nevertheless, the QL algorithm is well known for its slow convergence rate, especially in dynamic and complicated environments. Instead of adopting deep reinforcement learning structures e.g., double deep QL, deep dueling, in this article, we leverage the fact that there are usually multiple vehicles on the road to speed up the convergence to the optimal solution. To that end, we develop a lightweight yet very effective parallel QL algorithm to quickly obtain the optimal policy by simultaneously learning from various vehicles. Specifically, vehicles on the road can act as active learners to help the system simultaneously collect data. Based on the collected data, the proposed parallel QL algorithm can \emph{quickly learn} the environment information, e.g., RSSI profile, beam's location, and blockage, to derive the optimal beam association strategy. The proposed parallel QL algorithm does not only require lower complexity but also converge faster than the latest deep learning-based approaches (e.g., double QL, deep dueling). Moreover, unlike deep reinforcement learning methods (e.g., \cite{Khan2019Reinforcement}), our proposed parallel QL framework is proven to always converge to the optimal policy. We show that the high mobility and parallelism of vehicles now become helpful in speeding up the learning process of our underlying algorithm.

Moreover, unlike existing works, e.g., \cite{Sim2018An},~\cite{Khan2019Reinforcement}, in which learning algorithms are deployed at the vehicles with limited resources, in our design, the eNodeB executes the parallel QL algorithm and sends optimal beam association/handover actions to vehicles. As such, the computing complexity is moved to the eNodeB. The simulation results then show that our proposed solution can increase the data rate for each vehicle by up to 47\% and reduce the disconnection probability by 29\% compared to existing approaches.

\subsection{Main Contributions}
The major contributions of this paper are as follows.

\begin{itemize}
	\item Develop an optimal beam association solution for high-mobility mmWave vehicular communication networks using the semi-Markov decision process framework that can effectively deal with the inherent dynamics of the mmWave connection quality and the uncertainty of the environment, e.g., beam's location, RSSI profile, the velocity of the vehicle, and blockages, in a real-time manner.
	
	\item Develop a lightweight yet very effective parallel QL algorithm to quickly obtain the optimal policy by simultaneously learning from various vehicles to update the global Q-table at the eNodeB. The proposed parallel QL algorithm does not only require lower complexity but also converge faster than advanced deep reinforcement learning approaches (e.g., deep double and deep dueling). Specifically, the algorithm deploys multiple learning processes at the eNodeB, and each learning process is assigned to learn from a vehicle on the considered road to update the global policy.
	
	\item Prove that the proposed parallel QL framework converges with probability one to the optimal policy. Note that deep QL based approaches (e.g., deep double and deep dueling) with advanced deep neural networks are not always guaranteed to converge. We also provide a comprehensive analysis of the convergence time/rate, complexity, and overhead of the proposed framework.
	
	\item Perform extensive simulations to demonstrate the effectiveness of the proposed parallel QL algorithm. By learning from multiple vehicles and exploiting the high mobility of vehicles, our proposed algorithm can achieve the performance close to that of the hypothetical scheme which requires complete environment information in advance.
\end{itemize}
	
The rest of paper is organized as follows. The system model is described in Section~\ref{Sec.System}. Section~\ref{Sec:prob} presents the problem formulation based on the SMDP. Then, Section~\ref{sec:Q} describes the conventional QL algorithm and our proposed parallel QL algorithm. After that, we provide the evaluation results in Section~\ref{sec:evaluation}. Finally, the conclusion is highlighted in Section~\ref{sec:conclusion}.

\section{System Model}
\label{Sec.System}
\begin{figure}[h]
	\centering
	\includegraphics[scale=0.4]{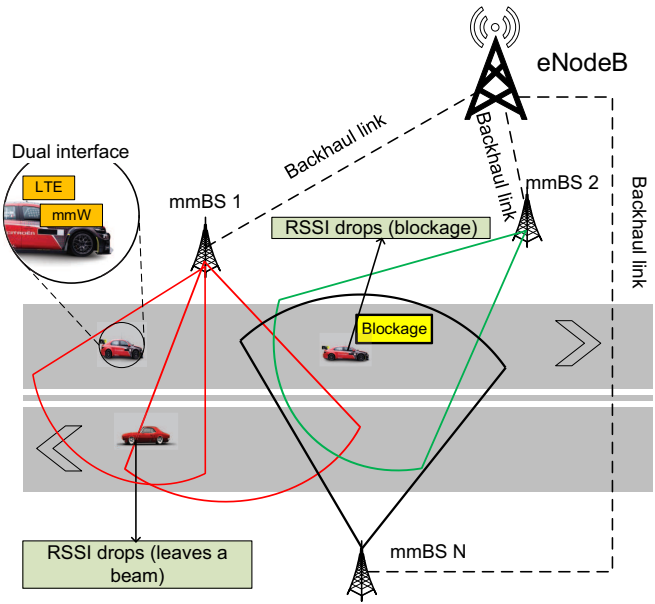}
	\caption{System model.}
	\label{Fig.system_model}
\end{figure}

Consider a millimeter wave (mmWave) vehicular network, where an LTE eNodeB and a set of $N$ mmWave base stations (mmBSs) $BS = \{BS_1, \ldots, BS_n, \ldots, BS_N\}$ are deployed as shown in Fig.~\ref{Fig.system_model}. This is an expected network model for 5G and beyond systems~\cite{Giordani2016Multi},~\cite{Shokri2016Spectrum}. All the mmBSs can connect to the eNodeB via backhaul links. Each vehicle is equipped with two communication interfaces: (i) an LTE interface to communicate with the eNodeB and (ii) an mmWave interface to communicate with an mmBS~\cite{Sim2018An}. The Friss free-space equation reveals that with given transmit power and antenna gains, the pathloss increases when the frequency increases~\cite{Va2016Millimeter},~\cite{Jameel2019Propa}. For example, increasing the carrier frequency from 2 GHz to 60 GHz~\footnote{60 GHz is the common frequency band in automotive communications.} results in an increase of 29 dB for the pathloss. The mmWave signals also suffer from obstacles and scattering objects (both static and dynamic). Thus, the path loss model can be formulated as follows~\cite{Va2016Millimeter},~\cite{Jameel2019Propa}:
\begin{equation}
\label{eq:pathloss}
\begin{aligned}
PL(d)[dB] &= PL(d_0) + 10n\log_{10}\frac{d}{d_0} + \psi\\
& =20\log_{10}\frac{4\pi d_0}{\lambda} + 10n\log_{10}\frac{d}{d_0} + \psi,
\end{aligned}
\end{equation}
where $d_0$ is the reference distance, $PL(d_0)$ is the free-space loss at the reference distance, $n$ is the path loss exponent, $d$ is the distance between the vehicle and the mmBS, $\lambda$ is the carrier wavelength, and $\psi$ is the log-normal \textit{shadowing loss} caused by the absorption of obstacles and scattering objects. Moreover, the log-normal \textit{shadowing loss} increases with distance between the vehicle and the mmBS as the farther the transmitter and the receiver are the more likely that there are obstructing objects in between them. When the path loss increases, the received power at the mmBS decreases. Thus, when the vehicle leaves a beam or moves to a blockage zone, the received signal strength indicator (RSSI) will drop. As such, the vehicle is not be able to communicate with its connected mmBS (through the beam). To avoid disrupting the service, the vehicle needs to connect to another beam which provides better channel quality. This beam can belong to the current connected mmBS, i.e., beam association, or belong to another mmBS, i.e., handover.

Conventional methods in the literature~\cite{Navarro2008MDP}-\cite{Elshaer2016Downlink} usually make a beam association/handover decision based on the current channel information or network state, e.g., SINR or RSSI, where the decision is triggered when the SINR/RSSI are dropped due to blockage objects or mobility. However, these solutions may lead to too frequent handover and hence the associated handover cost/interruption, especially in mmWave networks where the temporal degradation of channel quality frequently occurs due to the intermittent connectivity of mmWave links. In addition, as the mmWave communication quality often deteriorates intermittently and rapidly, these solutions may lead to the ping-pong effect resulting in high outage probability and low system performance~\cite{Zang2019Managing},~\cite{Sun2018Smart}. 

In this paper, our learning algorithm can efficiently address these critical problems by learning the environment conditions. In particular, we consider a centralized controller, i.e., the eNodeB, that ``learns" from vehicles on the considered road and makes beam association/handover decisions for all vehicles. With the proposed learning algorithm, the eNodeB can learn the RSSI profiles of the beams in the systems. To reduce the number of beam association/handovers, the eNodeB can guide the vehicle to connect to a beam with a ``better" RSSI profile (in terms of the long-run average data rate). In addition, the beam association/handover decision can be triggered even when the RSSI level is still good to anticipate the intermittent problem of mmWave links. It is worth noting that the RSSI level can be inferred by the mmBS through the received signals from the vehicle. Moreover, with current standards in ITS systems~\cite{Standard}, the mmBS always has the location of its connected vehicles. As such, the eNodeB runs the algorithm to obtain the best beam for the vehicle to connect without adding noticeable overhead to current ITS systems.

We assume that each mmBS $BS_n$ has a finite set $\mathcal{B}_n = \{b_{n,1}, \ldots, b_{n,k}, \ldots, b_{n,\mathrm{K}}\}$ of $K$ orthogonal beams~\cite{Sim2018An}. Based on the information learned from the vehicle, i.e., location, velocity, and RSSI level, the eNodeB selects beam $b_{n,k}$ of mmBS $BS_n$ to support the communications of the vehicle. In this paper, the velocity of each vehicle is not fixed. When the vehicle is connected to beam $b_{n,k}$, it can successfully transmit data with rate $r_{n,k}$. Note that $r_{n,k}$ is a random variable, depending on the RSSI level (i.e., channel quality) of the channel. Consider $M$ RSSI levels $\mathcal{R}=\{0, 1, \ldots, M-1\}$ which depend on the environmental conditions, e.g., channel conditions (as modeled above) and blockage probability. The higher the RSSI level is, the higher the achievable data rate of the vehicle. We assume that when the vehicle enters blockage zone, the RSSI level drops to $0$, and thus the vehicle cannot connect to the mmBS. We then define $\omega_{b_{n,k}}$ is the blockage probability of beam $b_{n,k}$ with $0 \leq \omega_{b_{n,k}} \leq 1$. $\omega_{b_{n,k}} = 1$ if there are static blockage objects (e.g., buildings) in the coverage of the beam. Note that the blockage model is not the input of our proposed algorithm, i.e., it is not required to be available in advance. Instead, our proposed algorithm can learn the blockage model by interacting with the environment and observing the immediate reward. Thus, our proposed framework can work with any probabilistic blockage model (i.i.d or non i.i.d). Similar to~\cite{Sim2018An}, we assume that $r_{n,k}$ is varied from 0 to $R_{\max}$, where $R_{\max}$ is the maximum achievable rate, corresponding to the highest RSSI level. Formally, $r_{n,k}$ can be formulated as in (\ref{eq:rate}).
\begin{equation}
\label{eq:rate}
r_{n,k}	=	\left\{	\begin{array}{ll}
0,	&	\mbox{with probability $\omega_{b_{n,k}}$},	\\
\Delta(l), l \in \{\mathcal{R} \setminus \{0\}\},	&	\mbox{with probability $1 - \omega_{b_{n,k}}$},
\end{array}	\right.
\end{equation}
where $\Delta(l)$ is the rate corresponding to the current RSSI level $l$~\cite{Sim2018An}. Note that to capture the fading effect of the channel, $\Delta(l)$ follows a given random distribution.

Note that although the above environment information (e.g., the RSSI profiles and the blockage probability) are required for the modeling/formulation purpose, the proposed parallel QL algorithm below does not require these parameters explicitly as input. Instead, after executing an action, the eNodeB observes the \emph{reward}, i.e., the actual data rate between the vehicle and the connected mmBS. The reward function (defined below) hence captures the communication channel between vehicles and mmBSs (e.g., the bit error rate and fading). The details of the proposed algorithm are described in the next section.


\section{Problem Formulation}
\label{Sec:prob}
This work aims to deal with the dynamics of blockages, channel quality, and mobility. However, the conventional MDP is not effective in capturing the dynamics and uncertainty of the system. Thus, we propose to use the SMDP~\cite{Puterman_1994_Book}. Different from the MDP, in an SMDP, an action is only taken when an event occurs. An SMDP can be defined as a tuple $<t_j, \mathcal{S}, \mathcal{A}, r>$, where $\mathcal{S}$ and $\mathcal{A}$ are the state space and the action space of the system, respectively. $t_j$ defines decision epoch $j$-th when an event occurs, and $r$ is the reward function.

\subsection{State Space}
The state space of each vehicle is represented as the discretized space of RSSI levels, the connected beam, the speed, and the direction of the vehicle. Formally, the state space $\mathcal{S}$ is defined as follows:
\begin{equation}
	\begin{aligned}
		\mathcal{S} \triangleq & \Bigl\{ (l, b_{n,k}, v, d): l \in \{0, \ldots, m, \ldots, M-1\},\\
		&b_{n,k} \in \mathcal{B}_n \cup \{b_{0,0}\}, \forall n \in \{1, \ldots, N\}, \forall k \in \{1, \ldots, K\},\\
		& v \in \{0, \ldots, v_\mathrm{max}\}, d \in \{0, 1\}\Bigr\},
	\end{aligned}
\end{equation}
where $l$ is the RSSI level of the current vehicle and $b_{n,k}$ is the current connected beam of the current vehicle. $v$ is the vehicle's speed, $v_\mathrm{max}$ is the maximum speed of the vehicle, and $d$ is the direction of the vehicle (0 and 1 represent two directions of the vehicle). $b_{0,0}$ is a virtual beam used to capture the case that there is no available beam at a given location of the current vehicle. In this case, the RSSI level of the current vehicle is always $0$. Note that due to the inherent dynamics of mmWave link's quality, the mmWave instantaneous RSSI estimate can be less accurate, providing less useful information to decide the next action. This is one of the key motivations of our machine learning-based framework that can learn the RSSI profiles of beams by interacting with the environment and observing the actual reward (transmission rate) other than just relying on the instantaneous RSSI estimate.

The road is modeled as a one-dimension area $\mathcal{W} \in \mathbb{R}$ that is discretized to $\floor*{\frac{W}{z}}$ zones, where $W$ is the length of the considered road, $z$ is the length of each zone, and $\floor{.}: \mathbb{R} \rightarrow \mathbb{N}$ is the floor function. When the vehicle at location $w \in \mathcal{W}$, the vehicle is at zone $\floor*{\frac{w}{z}}$-th. At the current state $s \in \mathcal{S}$, an event $e_s$ is triggered if a vehicle reaches a new zone. Note that as the vehicle's speed is not fixed, the time interval between two consecutive epochs varies. To capture that, the SMDP is used in this work, instead of the conventional Markov decision process with identical time slots. Specifically, the eNodeB only takes an action when event $e_s$ occurs. Note that unlike existing works that have the location information as part of the state space, our framework captures the location information via events $e_s$ under the SMDP formulation. This allows us to efficiently deal with dynamics and uncertainties during the learning process, which is very beneficial in our considered system with the dynamics of blockages, channel quality, and mobility.

\subsection{Action Space}

When the vehicle reaches a new zone on the road, i.e., event $e_s$ is triggered (given its current state $s \in \mathcal{S}$), the eNodeB decides if the vehicle needs to associate to a new beam or stay on the current beam. The action space $\mathcal{A}_s$ is defined as:
\begin{equation}
\mathcal{A}_s \triangleq \{ a\}= \{b_{n,k}, b_{0,0}\}, \forall n, \forall k,
\end{equation}
where $a$ is the action made at state $s$. $a = b_{n,k}$ if the eNodeB guides the vehicle to connect to beam $b_{n,k}$. This includes the case staying with the current beam. $a = b_{0,0}$ if there is no available beam at the current location.

\subsection{Immediate Reward}
\label{subsec:immediate}
In this paper, we aim to maximize the long-term average data rate of the system. As mentioned, at decision epoch $t$, if an action is taken so that the vehicle connects to beam $b_{n,k}$, it can communicate with a rate of $0 \leq r_{n,k} \leq R_{max}$ corresponding to the current RSSI level. The resulting data that the vehicle receives from the mmBS is calculated as $j_{n,k}^t r_{n,k}$, where $j_{n,k}^t$ is the connection time between two consecutive decision epochs (during which the vehicle can communicate with the mmBS through beam $b_{n,k}$). As the algorithm observes the reward at the end of each decision epoch, $j_{n,k}^t$ is the duration from the time that the vehicle enters the current zone until it leaves to enter the next zone. $j_{n,k}^t$ hence depends on the speed of the vehicle $v_t$ at the epoch $t$. Practically, the speed $v_t$ can change from one to another epoch or even during the time $j_{n,k}^t$. However, without loss of generality, we assume that the time $j_{n,k}^t$ between two consecutive epochs is small enough (e.g., by setting the length per zone $z$ as small as necessary) so that the vehicle's RSSI level and speed remain unchanged. Thus, connection time $j_{n,k}^t$ can be calculated as $\frac{z}{v_t}$. 

In addition, the service may be interrupted during the handover/beam-switching, denoted as $h$, i.e., the time it takes for the vehicle to switch to the new mmBS. We assume that the handover time is the same for all the mmBSs. Taking the handover time into account, at state $s \in \mathcal{S}$, the immediate data rate after performing action $a$ is obtained in (\ref{5}).
\begin{equation}\label{5}
r(s_t,a_t)	=	\left\{	\begin{array}{ll}
(j_{n,k}^t - h) r_{n,k},	&	\mbox{if $a_t = b_{n,k}$ and the vehicle}\\
&\mbox{handovers to a new mmBS},	\\
j_{n,k}^t r_{n,k},	&	\mbox{if $a_t = b_{n,k}$ and the vehicle}\\
&\mbox{stays in the current mmBS},	\\
0,	&	\mbox{otherwise},
\end{array}	\right.
\end{equation}
where $s_t$ and $a_t$ are the system state and the action taken at decision epoch $t$, respectively and $r_{n,k}$ is the communication rate when the vehicle connects to beam $b_{n,k}$ as defined in (\ref{eq:rate}). Note that $j_{n,k}^t$, $r_{n,k}$, and $h$ depend on the channel and environment conditions such as blockages, beam's location, and RSSI profiles. Under our design, they are implicitly learned through interacting with the environment and observing the immediate reward in an online manner.

\subsection{Optimization Formulation}

The decision policy $\pi$ of the proposed SMDP can be defined as a mapping from the state space to the action space: $\mathcal{S} \rightarrow \mathcal{A}_s$~\cite{Puterman_1994_Book},~\cite{Li2018SMDP}. Thus, with initial state $s$, the long-term average data rate is formulated as follows:
\begin{equation}
\label{eq:average_reward}
\mathcal{R}_{\pi}(s) = \lim\limits_{T \rightarrow \infty} \frac{\mathbb{E}\{\sum_{t=0}^{T} r(s_t, \pi(s_t)) | s_0 = s\}}{\mathbb{E}\{\sum_{t=0}^{T}\xi_t | s_0 = s\}}, \forall s \in \mathcal{S},
\end{equation}
where $\xi_t$ is the time interval between the $t$-th and $(t+1)$-th decision epochs, $\pi(s)$ is the action at state $s$ based on policy $\pi$, and $r$ is the immediate reward after performing an action. In Theorem~\ref{theo:reward}, we will prove that the limit in (\ref{eq:average_reward}) exits~\cite{JSAC2}.
\begin{theorem}
	\label{theo:reward}
	With the number of events in a given time and the number of states in the state space $\mathcal{S}$ are finite, we have:
	\begin{equation}
	\begin{aligned}
	\label{eq:reward_theorem}
	\mathcal{R}_{\pi}(s) &= \lim\limits_{T \rightarrow \infty} \frac{\mathbb{E}\{\sum_{t=0}^{T} r({s}_t, \pi(s_t)) | s_0 = s\}}{\mathbb{E}\{\sum_{t=0}^{T}\xi_t | s_0 = s\}}\\
	&= \frac{\overline{\mathcal{L}}_\pi r(s,\pi(s))}{\overline{\mathcal{L}}_\pi y(s,\pi(s))}, \forall s \in \mathcal{S},
	\end{aligned}
	\end{equation}
	where $y(s,\pi(s))$ denotes the expected time interval between two consecutive decision epochs when an action is taken at state $s$ following policy $\pi$. $\overline{\mathcal{L}}_\pi$ is the limiting matrix of the transition probability matrix $\mathcal{L}_\pi$~\cite{Puterman_1994_Book}, defined as:
	\begin{equation}
	\overline{\mathcal{L}}_\pi = \lim\limits_{T \rightarrow \infty} \frac{1}{T}\sum_{t=0}^{T-1} \mathcal{L}_\pi^t,
	\end{equation}
\end{theorem}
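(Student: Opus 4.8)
The plan is to exploit the fact that, once the policy $\pi$ is fixed, the state sequence $\{s_t\}$ becomes a time-homogeneous Markov chain governed by the transition matrix $\mathcal{L}_\pi$, so that both the numerator and the denominator in (\ref{eq:average_reward}) can be rewritten as partial sums of matrix powers acting on fixed vectors. First I would write $r(s,\pi(s))$ for the \emph{expected} immediate reward (averaging over the random rate $r_{n,k}$ in (\ref{eq:rate})) and $y(s,\pi(s))$ for the expected sojourn time, both viewed as vectors indexed by the states. Conditioning on $s_0=s$, the Markov property then yields $\mathbb{E}\{r(s_t,\pi(s_t))\,|\,s_0=s\}=(\mathcal{L}_\pi^t\,r)(s)$ and $\mathbb{E}\{\xi_t\,|\,s_0=s\}=(\mathcal{L}_\pi^t\,y)(s)$, where $(\cdot)(s)$ denotes the $s$-th entry. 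Summing over $t$ recasts the numerator as $\big(\sum_{t=0}^{T}\mathcal{L}_\pi^t\,r\big)(s)$ and the denominator as $\big(\sum_{t=0}^{T}\mathcal{L}_\pi^t\,y\big)(s)$.

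Next I would normalise both sums by $1/(T+1)$ and invoke the Cesàro convergence of the transition matrix: because $\mathcal{S}$ is finite, the time-averaged powers $\frac{1}{T+1}\sum_{t=0}^{T}\mathcal{L}_\pi^t$ converge entrywise to the limiting matrix $\overline{\mathcal{L}}_\pi$, even when the induced chain is periodic. This drives the normalised numerator to $(\overline{\mathcal{L}}_\pi\,r)(s)$ and the normalised denominator to $(\overline{\mathcal{L}}_\pi\,y)(s)$. Writing the ratio in (\ref{eq:average_reward}) as the quotient of these two normalised sums and using the elementary fact that $A_T/B_T\to a/b$ whenever $A_T/(T+1)\to a$ and $B_T/(T+1)\to b$ with $b\neq 0$, I arrive at exactly (\ref{eq:reward_theorem}).

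The main obstacle is twofold. The essential technical ingredient is the existence of the Cesàro limit $\overline{\mathcal{L}}_\pi$, which rests on the finiteness of $\mathcal{S}$ assumed in the statement; this is precisely what guarantees that the time-averaged matrix powers stabilise regardless of periodicity, and I would cite the standard limiting-matrix result for finite Markov chains rather than reprove it. The second, more delicate point is ensuring that the denominator limit $(\overline{\mathcal{L}}_\pi\,y)(s)$ is strictly positive, so that the ratio of limits is well defined; this follows because the hypothesis that only finitely many events occur in a given time forces every expected sojourn time $y(s,\pi(s))$ (each of the form $z/v_t$ with finite speed) to be bounded away from zero, whence the nonnegative combination $\overline{\mathcal{L}}_\pi\,y$ inherits strict positivity. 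With these two facts secured, the remaining ratio manipulation is routine.
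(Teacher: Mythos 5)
Your proposal is correct and follows essentially the same route as the paper's own proof: both rewrite the numerator and denominator of (\ref{eq:average_reward}) via powers of $\mathcal{L}_\pi$, rest on the existence of the Ces\`aro limiting matrix $\overline{\mathcal{L}}_\pi$ for the finite state space (the paper's Lemma~\ref{Lem:exist}), and conclude by identifying the long-term average with the ratio $(\overline{\mathcal{L}}_\pi r)(s)/(\overline{\mathcal{L}}_\pi y)(s)$. If anything, you are more careful than the paper on the two points it glosses over --- that the Ces\`aro limit exists for finite chains regardless of periodicity (the paper's appendix needlessly invokes aperiodicity at the end), and that the denominator $(\overline{\mathcal{L}}_\pi y)(s)$ is bounded away from zero via $y \geq z/v_{\max} > 0$ and the stochasticity of $\overline{\mathcal{L}}_\pi$, which is exactly what legitimises the ratio-of-limits step the paper asserts without justification.
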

\begin{proof}
	First, we prove the following lemma.
	\begin{lemma}
		\label{Lem:exist}
		The limiting matrix $\overline{\mathcal{L}}_\pi$ of the the transition probability matrix $\mathcal{L}_\pi$ always exists.
	\end{lemma}
	\begin{proof}
	The proof of Lemma~\ref{Lem:exist} is provided in Appendix~\ref{app:reward}.
	\end{proof}
	As the limiting matrix $\overline{\mathcal{L}}_\pi$ exits (see Lemma~\ref{Lem:exist}) and the total probabilities of transiting from a given state to other states equals to 1, i.e., $\sum_{s'\in \mathcal{S}}^{}\overline{\mathcal{L}}_\pi(s'|s) = 1$ , we have:
	\begin{equation}
	\begin{aligned}
	&\overline{\mathcal{L}}_\pi r(s, \pi(s)) = \lim\limits_{T \rightarrow \infty} \frac{1}{T+1}\mathbb{E} \Bigl\{\sum_{t=0}^{J}r(s_t, \pi(s_t))\Bigr\}, \forall s \in \mathcal{S}, \\
	&\overline{\mathcal{L}}_\pi y(s, \pi(s)) = \lim\limits_{T \rightarrow \infty} \frac{1}{T+1}\mathbb{E}\Bigl\{\sum_{t=0}^{T}\xi_t\Bigr\}, \forall s \in \mathcal{S}.
	\end{aligned}
	\end{equation}
	Clearly, the long-term average reward in (\ref{eq:reward_theorem}) is obtained by taking the ratio of $\overline{\mathcal{L}}_\pi r(s, \pi(s))$ and $\overline{\mathcal{L}}_\pi y(s, \pi(s))$. In addition, the ratio of limits equals to the limit of the ratio. Thus, the long-term average reward in (\ref{eq:reward_theorem}) is well defined and exists.
\end{proof}

Next, in Theorem~\ref{theo:limitexists}, we prove that the underlying Markov chain is irreducible, and thus the long-term average date rate $\mathcal{R}(\pi)$ does not depend on the initial state $s_0$.

\begin{theorem}
	\label{theo:limitexists}
	For every $\pi$, the long-term average date rate $\mathcal{R}(\pi)$ is well defined and does not depend on the initial state, i.e., $\mathcal{R}_{\pi}(s)=\mathcal{R}_{\pi}, \forall s \in \mathcal{S}$.
\end{theorem}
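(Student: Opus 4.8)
The plan is to reduce the claim to a standard fact about finite Markov chains: if the chain induced by $\pi$ on $\mathcal{S}$ is irreducible, then its Cesàro-limiting matrix $\overline{\mathcal{L}}_\pi$ (which exists by Lemma~\ref{Lem:exist}) has all rows equal to a single stationary distribution, and the state-independence of $\mathcal{R}_\pi$ then follows directly from the ratio formula in Theorem~\ref{theo:reward}. Concretely, once every row of $\overline{\mathcal{L}}_\pi$ equals the same probability vector $\mu_\pi$, the numerator $\overline{\mathcal{L}}_\pi r(s,\pi(s))$ and the denominator $\overline{\mathcal{L}}_\pi y(s,\pi(s))$ each collapse to a scalar that does not involve the starting state $s$, namely $\sum_{s'\in\mathcal{S}}\mu_\pi(s') r(s',\pi(s'))$ and $\sum_{s'\in\mathcal{S}}\mu_\pi(s') y(s',\pi(s'))$, so their ratio is a constant $\mathcal{R}_\pi$ independent of $s$.

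First I would establish irreducibility of the chain under an arbitrary but fixed policy $\pi$. The goal is to show that for any pair of states $s=(l,b_{n,k},v,d)$ and $s'=(l',b_{n',k'},v',d')$ there is a finite sequence of events with strictly positive probability leading from $s$ to $s'$. I would build this path from the physical dynamics: (i) since the road of length $W$ is discretized into finitely many zones and the vehicle advances one zone per decision epoch, the positional/beam component cycles through all reachable zones, so the target connected beam $b_{n',k'}$ (including the virtual beam $b_{0,0}$) is attained at the appropriate zone; (ii) by the rate model in (\ref{eq:rate}), every RSSI level occurs with positive probability, since the fading distribution of $\Delta(l)$ together with the blockage probability $\omega_{b_{n,k}}$ assigns nonzero mass to each level, so $l'$ is reachable; and (iii) the speed may change from one epoch to the next within $\{0,\dots,v_\mathrm{max}\}$, so $v'$ is attained. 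Composing these transitions, every target state is reached with positive probability, which yields a single communicating class.

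With irreducibility in hand, the conclusion follows from the classical theory of finite Markov chains: an irreducible finite chain is positive recurrent and admits a unique stationary distribution $\mu_\pi$, and its Cesàro limit satisfies $\overline{\mathcal{L}}_\pi = \mathbbm{1}\mu_\pi^{\top}$, i.e., all rows identical. Substituting this into (\ref{eq:reward_theorem}) and invoking Theorem~\ref{theo:reward} gives $\mathcal{R}_\pi(s)=\mathcal{R}_\pi$ for all $s\in\mathcal{S}$.

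The main obstacle I anticipate is the irreducibility step, and within it the positional and directional components in particular. The vehicle's motion along the finite road is essentially a deterministic drift in the direction $d$, so one must argue that the chain nonetheless returns to every zone --- typically by invoking the modeling assumption that a vehicle leaving the road is replaced by (or re-enters as) another vehicle, closing the trajectory into a recurrent cycle --- and one must rule out any decomposition of $\mathcal{S}$ into separate classes indexed by the direction $d$ or by disconnected beam regions. Verifying full support of the RSSI transitions (so that no level is transient) and checking that the virtual-beam state $b_{0,0}$ communicates with the ordinary beam states are the remaining technical points; once these are secured, the unichain structure, and hence the state-independence of $\mathcal{R}_\pi$, follows.
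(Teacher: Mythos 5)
Your proposal follows essentially the same route as the paper's own proof: establish that the chain induced by $\pi$ on $\mathcal{S}$ is irreducible (a single communicating class), using the full-support RSSI transitions, the reachability of all beams as the vehicle traverses the road, the re-entry of a new vehicle once the current one exits, and the variability of speed and direction, and then invoke standard finite-chain theory to conclude the average rate is well defined and state-independent. Your unpacking of the cited standard fact---that irreducibility forces every row of the Ces\`aro limit $\overline{\mathcal{L}}_\pi$ to equal the unique stationary distribution, so the ratio in Theorem~\ref{theo:reward} collapses to a state-free constant---and your explicit flagging of the direction, virtual-beam, and road-re-entry issues make the argument somewhat more careful than the paper's, which delegates these points to a citation, but the underlying approach is identical.
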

\begin{proof}
The proof of Theorem~\ref{theo:limitexists} is provided in Appendix~\ref{appendix:recurrent}.
\end{proof}

Then, the long-term average data rate optimization problem can be formulated as follows:
\begin{eqnarray}
\label{eq:optimization}
\max_{\pi}	& &	\mathcal{R}_{\pi} = \frac{\overline{\mathcal{L}}_\pi r(s,\pi(s))}{\overline{\mathcal{L}}_\pi y(s,\pi(s))}	\\
\mbox{s.t.}			& & \sum_{s'\in \mathcal{S}}^{}\overline{\mathcal{L}}_\pi(s'|s) = 1, \forall s \in \mathcal{S} \nonumber.
\end{eqnarray}

Our aim in this work is finding the optimal beam association policy to maximize the long-term average data rate, i.e.,
\begin{equation}
\pi^* = \argmax_\pi \mathcal{R}_{\pi}.
\end{equation}

\section{Parallel Reinforcement Learning for Beam Association in High Mobility mmWave Vehicular Networks}
\label{sec:Q}

In this section, we develop the parallel QL algorithm that obtains the optimal beam association policy much faster than those of the existing reinforcement learning based algorithms (e.g., \cite{Khan2019Reinforcement}). For that, we first briefly present the details of the conventional QL algorithm. Related mmWave works in the literature usually adopt the QL and deep QL algorithms to solve their problems. However, with dynamic and complicated system, the QL algorithm usually takes a very long time to obtain the optimal strategy. In addition, the deep QL algorithm (e.g., \cite{Khan2019Reinforcement}) requires high performance computing resources and does not always ensure to converge to the optimal policy due to the overestimation of the optimizer. Note that our proposed parallel QL algorithm obtains the optimal beam association policy in an on-line manner, bringing various advantages, compared with the off-line learning approach. First, in mmWave vehicular communication systems, the conditions of the environment may be changed quickly. This is due to the fast and intermittent change in the quality of the mmWave links and the dynamics of blockages. Thus, a good beam/mmBS at the current time may quickly become worse later. For that, if we use the off-line training, the framework may not be able to obtain the ``current'' optimal policy in time, and thus leading to poor learning performance, especially in dynamic environments like mmWave networks. Under the proposed on-line training, channel conditions can be quickly learned to obtain the optimal beam association policy for vehicles. Second, off-line training requires frequent maintenance and adjustment to adapt to the new conditions of the environment (e.g., new buildings, new bus/transportation schedules). Differently, our on-line training can reduce human intervention and management costs as the algorithm can automatically learn the environment conditions and adjust its optimal policy.

\subsection{Q-Learning Approach}
This section presents the QL algorithm~\cite{Watkins1992QLearning}, which enable the eNodeB to obtain the optimal beam association strategy for vehicles without prior environment parameters, e.g., RSSI profiles and blockages. The key idea of the QL algorithm is updating the Q-value function for all state-action pairs stored in a Q-table. At a given system state, the QL algorithm performs an action and observes the immediate reward as well as the next state of the system. Based on these observations, the algorithm can update the Q-value for the current state-action pair based on the Q-value function~\cite{Watkins1992QLearning}. As such, the learning process is able to learn from the previous experiences, i.e., current state, action, next state, and immediate reward, to derive the optimal solution~\cite{Watkins1992QLearning}. In the following, we present the fundamentals of the Q-value function.

We first define the beam association policy as $\pi:\mathcal{S} \rightarrow \mathcal{A}$. In particular, $\pi$ is a mapping from a given state to its corresponding action. Our aim in this paper is finding the optimal beam association policy $\pi^*$ to optimize the system performance in terms of the average data rate, disconnection probability, and number of handovers. Then, we define $\mathcal{V}^\pi(s): \mathcal{S} \rightarrow \mathbb{R}$ as the expected value function of state $s \in \mathcal{S}$ given policy $\pi$. $\mathcal{V}^\pi(s)$ can be formulated as follows:
\begin{equation}
\begin{aligned}
\mathcal{V}^\pi(s) &= \mathbb{E}_\pi \Big [ \sum_{t=0}^{\infty} \gamma^t r_t(s_t, a_t)|s_0=s\Big ]\\
& =\mathbb{E}_\pi\Big [ r_t(s_t, a_t) + \gamma\mathcal{V}^\pi(s_{t+1})|s_0=s\Big ],
\end{aligned}
\end{equation}
where $0\leq \gamma < 1$ denotes the discount factor. In particular, $\gamma$ represents the effect of the future rewards. The higher the discount factor is, the more important future rewards are. At each state $s$, we aim to find the optimal action to derive the optimal beam association policy $\pi^*$, which is a map from a given state to the optimal action. To do that, the optimal value function for each state has to be obtained as in (\ref{eq:13}).
\begin{equation}
	\label{eq:13}
\mathcal{V}^*(s) = \max_{a} \Big \{ \mathbb{E}_\pi[r_t(s_t, a_t)+ \gamma\mathcal{V}^\pi(s_{t+1})] \Big\} ,\quad \forall s \in \mathcal{S}.
\end{equation}
We then denote the optimal Q-function state-action pair $(s,a), \forall s \in \mathcal{S}, \forall a \in \mathcal{A}$ as follows:
\begin{equation}
\mathcal{Q}^*(s,a) \triangleq r_t(s_t, a_t) + \gamma\mathbb{E}_\pi[\mathcal{V}^\pi(s_{t+1})].
\end{equation}
Hence, the optimal value function is written as follows:
\begin{equation}
\label{eq:valuefunction}
\mathcal{V}^*(s) = \max_{a} \{ \mathcal{Q}^*(s,a)\}.
\end{equation}
To solve~(\ref{eq:valuefunction}), we can update the Q-function to determine the optimal Q-values of all state-action pairs by using (\ref{Eq:qfunction})~\cite{Watkins1992QLearning}:
\begin{equation}
\label{Eq:qfunction}
\begin{aligned}
\mathcal{Q}_{t+1}(s_t,a_t) = \mathcal{Q}_t&(s_t,a_t) + \tau_t \Big [ r_t(s_t, a_t)\\
& + \gamma\max_{a_{t+1}} \mathcal{Q}_t(s_{t+1}, a_{t+1})- \mathcal{Q}_t(s_t,a_t)\Big ],
\end{aligned}
\end{equation}
where $\tau_t$ denotes the learning rate, determining the impact of new experiences to the current Q-value~\cite{Watkins1992QLearning}. By updating the Q-value functions of all state-action pairs by using (\ref{Eq:qfunction}), the algorithm can derive the optimal beam association policy.

\subsection{Parallel Q-Learning Approach}
\label{subsec:parallelQ}
\begin{figure*}[!]
	\centering
	\includegraphics[scale=0.33]{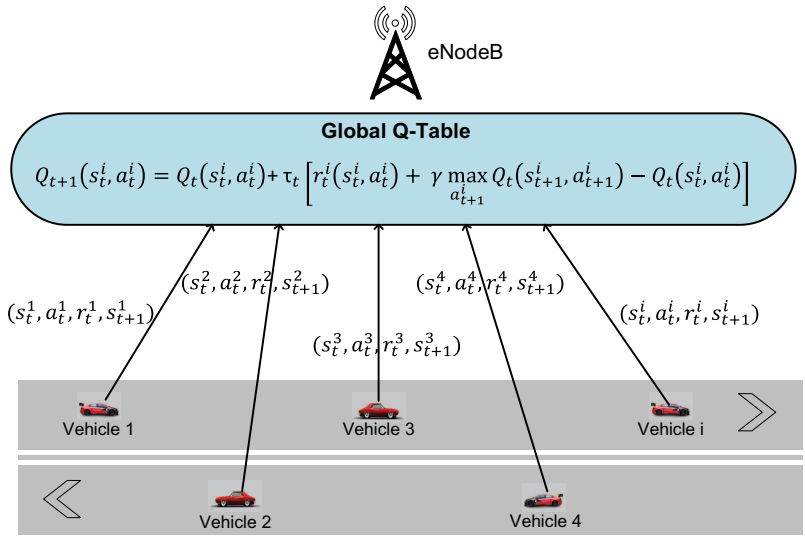}
	\caption{Parallel QL based model enables simultaneous learning from multiple vehicles.}
	\label{fig:parallellearning}
\end{figure*}

Note that the conventional QL algorithm can converge to the optimal beam association policy quickly when the system is simple. However, with the dynamics and uncertainties of the system considered in this work, the QL algorithm may take a very long time to obtain the optimal strategy. This is due to the fact that the QL algorithm require a huge number of training episodes to collect enough data for learning. To speed up the QL algorithm, the deep reinforcement learning algorithm is usually adopted in the literature. Nevertheless, this algorithm requires high performance computing resources and does not ensure to converge to the optimal policy due to the overestimation of the optimizer~\cite{JSAC2},~\cite{JSAC}. The convergence rate of the deep reinforcement learning algorithm can be improved by implementing several techniques. For example, by using the deep dueling neural network architecture~\cite{Wang2015Dueling}, one can significantly improve the convergence rate and stability of the QL algorithm. However, in our work, we leverage the fact that there are simultaneously multiple vehicles running on the road to design the parallel QL algorithm. The proposed algorithm does not only require lower complexity but also converge faster than the latest deep learning-based approaches. In particular, vehicles running on the road act as active learners which can help the system simultaneously collect data and significantly speed up the learning process as shown in Fig.~\ref{fig:parallellearning}.

To that end, the parallel QL algorithm employs multiple learning processes. Each learning process is assigned for a vehicle running on the road (in the coverage of the eNodeB). Specifically, each learning process $i$ updates the Q-value function at the global Q-table as follows:
\begin{equation}
\label{eq:updateQ}
\begin{aligned}
\mathcal{Q}_{t+1}(s_t^i,a_t^i) = \mathcal{Q}_t&(s_t^i,a_t^i) + \tau_t \Big [ r_t^i(s_t^i, a_t^i) \\
&+ \gamma\max_{a_{t+1}^i} \mathcal{Q}_t(s_{t+1}^i, a_{t+1}^i)- \mathcal{Q}_t(s_t^i,a_t^i)\Big ],
\end{aligned}
\end{equation}
where $0\leq \gamma < 1$ is the discount factor that presents the effect of future rewards~\cite{Watkins1992QLearning}. In particular, when $\gamma$ is low, e.g., close to 0, the learning process prefers the current reward. Differently, when $\gamma$ is high, e.g., close to 1, the long-term reward will be considered. In this work, we set $\gamma$ the same for all vehicles, i.e., learners. $r_t^i(s_t^i, a_t^i)$ is the immediate reward when vehicle $i$ performs action $a_t^i$ at state $s_t^i$ (computed using equation \ref{5} above). $\tau_t$ is the learning rate at decision epoch $t$~\cite{Watkins1992QLearning}. Note that the learning rate can be fixed at a constant value or it can be adjusted when running the algorithm. In this paper, the learning rate is fixed during the training process and is the same for all learning processes. At each decision epoch and given a current state, i.e., RSSI level and current connected beam, the current vehicle chooses to connect to a beam (following the current beam association policy sent from the eNodeB) and observes the data rate of the connected beam as well as the next state. Then, these observations are sent to the eNodeB for learning by the corresponding learning process to update the global Q-table (equation \ref{eq:updateQ}). Algorithm~\ref{algorithm1} describes the proposed parallel QL algorithm.
\begin{algorithm}
	\caption{Parallel QL Algorithm for Vehicle $i$}
	\label{algorithm1}
	\begin{algorithmic}[1]
		\For{\textit{t=1 to T}}
		\State Vehicle $i$ observes the current state $s_t^i \in \mathcal{S}$ and execute action $a_t^i \in \mathcal{A}$ based on the $\epsilon$-greedy policy.
		\State Vehicle $i$ observes the immediate reward $r_t^i$ and new state $s_{t+1}^i \in \mathcal{S}$.
		\State Vehicle $i$ sends transition $(s_t^i, a_t^i, r_t^i, s_{t+1}^i)$ to the eNodeB for learning by updating the table entry of $\mathcal{Q}(s_t^i, a_t^i)$ as as follows:
		\begin{equation}
		\begin{aligned}
		\mathcal{Q}_{t+1}(s_t^i,a_t^i) = \mathcal{Q}_t&(s_t^i,a_t^i) + \tau_t\Big [ r_t^i(s_t^i, a_t^i) \\
		&+ \gamma\max_{a_{t+1}^i} \mathcal{Q}_t(s_{t+1}^i, a_{t+1}^i)- \mathcal{Q}_t(s_t^i,a_t^i)\Big ]
		\end{aligned}
		\end{equation}
		\State Replace $s_t^i \leftarrow s_{t+1}^i$.
		\EndFor
	\end{algorithmic}
\end{algorithm}

In particular, vehicle $i$ first observes the current state $s_t^i \in \mathcal{S}$ and performs action $a_t^i$ based on the $\epsilon$-greedy policy~\cite{Sutton1998Reinforcement}. Then, the eNodeB selects an action that maximizes the Q-value function with probability $1-\epsilon$ and a random action with probability $\epsilon$. Then the eNodeB sends this action to vehicle $i$ to perform. In this work, we gradually reduce the value of $\epsilon$. In other words, the algorithm first chooses random actions and gradually change to the deterministic strategy, i.e., choose an action with the highest Q-value at a given state. To that end, $\epsilon$ is set at a high value (e.g., 1) when the algorithm starts running. Then, at later iterations, the value of $\epsilon$ is slowly reduced to a small value (e.g., 0.01). After performing action $a_t^i$, vehicle $i$ observes immediate data rate $r_t^i(s_t^i, a_t^i)$ and next state $s_{t+1}^i$. These observations are then sent to the eNodeB for learning. Note that the learning process of each vehicle is independent from others, and all the learning processes share the same global Q-table. By doing that, the Q-table is updated with more experiences from multiple vehicles running on the road. As such, the convergence rate and convergence time of the parallel QL algorithm will be better than that of the conventional QL algorithm as demonstrated in Section~\ref{sec:evaluation}.

The convergence of our proposed parallel QL algorithm can be always guaranteed. In particular, as the learning processes update the global Q-table simultaneously, the selection of an action of a learning process may be based on a new Q-value that is updated by other learning processes. Specifically, at state $s$, a learning process chooses to perform action $a$ that maximizes the Q-value function. However, before that instance, another learning process may reach state $s$, then take action $a'$, and update the Q-value of action $a'$ to be the highest value. In this case, action $a$ may still be selected by the $\epsilon$-greedy policy mentioned above. In other words, an iteration for exploitation becomes an iteration for exploration. Moreover, multiple learning processes (corresponding to multiple vehicles running on the road), which simultaneously update different non-related states, can be considered as independent processes~\cite{Camelo2016Scalable}. As a result, the convergence of the proposed parallel Q-learning algorithm is guaranteed and sped up. Given the above, the convergence of the proposed parallel QL algorithm to the optimal policy is formally stated in Theorem~\ref{theo:convergeQ}.
\begin{theorem}
	\label{theo:convergeQ}
	Given the learning processes are serializable and under the conditions of $\tau_t$ in~(\ref{Eq:rules}), the parallel QL algorithm is ensured to converge to the optimal policy.
	\begin{equation}
		\label{Eq:rules}
		\tau_t \in [0,1), \sum_{t=1}^{\infty}\tau_t = \infty, \mbox{ and } \sum_{t=1}^{\infty} ( \tau_t  )^{2} < \infty.
	\end{equation}
\end{theorem}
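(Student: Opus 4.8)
The plan is to reduce the parallel update to an equivalent single-agent Q-learning recursion and then invoke the classical Watkins--Dayan convergence result, viewing that recursion as a stochastic approximation scheme for the fixed point of the Bellman optimality operator. The first observation is that the per-process update (\ref{eq:updateQ}) is \emph{identical in form} to the standard single-learner update (\ref{Eq:qfunction}); the only structural difference is that the stream of transitions $(s_t^i,a_t^i,r_t^i,s_{t+1}^i)$ feeding the single shared global Q-table now originates from multiple vehicles rather than one. Hence it suffices to show that this merged stream, processed on one table, converges.

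The key step is to exploit the serializability hypothesis. Each update in (\ref{eq:updateQ}) is an atomic read--modify--write: it reads the row $\mathcal{Q}_t(s_{t+1}^i,\cdot)$ to form the $\max$, reads the entry $\mathcal{Q}_t(s_t^i,a_t^i)$, and writes back a new value for that single entry. By definition, a serializable schedule of such concurrent operations is equivalent to \emph{some} serial schedule. I would therefore argue that there exists a single global ordering of all transitions across all vehicles such that applying the single-agent rule (\ref{Eq:qfunction}) in that order produces exactly the same sequence of global Q-tables as the parallel execution. Under this relabeling the parallel algorithm \emph{is} a sequential Q-learning process driven by a (random) interleaving of the vehicles' experiences.

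It then remains to verify the three hypotheses of the classical convergence theorem for the serialized process. \textbf{(i) Infinite visitation:} because every vehicle follows an $\epsilon$-greedy policy with $\epsilon$ bounded away from $0$ over the infinite horizon and the state and action spaces are finite, every state--action pair is visited infinitely often; parallelism only \emph{increases} the visitation counts, so this condition is made easier, not harder. \textbf{(ii) Step-size conditions:} the hypotheses $\tau_t\in[0,1)$, $\sum_t \tau_t=\infty$, and $\sum_t \tau_t^2<\infty$ in (\ref{Eq:rules}) are precisely the Robbins--Monro conditions required by the stochastic-approximation argument. \textbf{(iii) Bounded rewards and contraction:} by (\ref{eq:rate}) the instantaneous rate lies in $[0,R_{\max}]$ and connection times are finite, so the rewards are uniformly bounded; together with $0\le\gamma<1$ this makes the Bellman optimality operator a $\gamma$-contraction in the sup-norm with the unique fixed point $\mathcal{Q}^*$. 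Invoking the theorem then yields $\mathcal{Q}_t\to\mathcal{Q}^*$ with probability one, and the induced greedy policy converges to $\pi^*$.

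The main obstacle, and the place where I would spend the most care, is making the serializability reduction honest in the presence of \emph{stale reads}: a learner may compute its $\max$ using entries that another learner is concurrently overwriting, so a greedy action chosen at state $s$ may no longer be greedy by the time it is applied. I would handle this exactly as the surrounding discussion suggests --- a stale $\max$ that selects a non-maximizing action merely turns an intended exploitation step into an exploration step, which never violates the infinite-visitation or valid-backup requirements above. If a fully rigorous treatment of arbitrarily delayed information were needed, the fallback is to drop the serializability assumption and instead cast (\ref{eq:updateQ}) in the asynchronous stochastic-approximation framework, whose convergence theorem tolerates bounded update delays and delivers the same conclusion $\mathcal{Q}_t\to\mathcal{Q}^*$ almost surely.
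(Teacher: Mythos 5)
Your proposal is correct and takes essentially the same route as the paper: reduce the concurrent updates to an equivalent serialized single-agent Q-learning process via the serializability hypothesis (with stale reads absorbed as exploration steps, exactly as argued in Section~\ref{subsec:parallelQ}), then conclude via the Watkins--Dayan convergence theorem under the Robbins--Monro step-size conditions of (\ref{Eq:rules}). The only difference is one of detail rather than substance: the paper's Appendix~\ref{appendix:convergeQ} re-derives the Watkins--Dayan action-replay-process argument explicitly, whereas you invoke that theorem as a black box after verifying its hypotheses --- and you usefully make the infinite-visitation requirement explicit, which the paper leaves implicit.
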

\begin{proof}
Theorem~\ref{theo:convergeQ} is proved in Appendix~\ref{appendix:convergeQ}.
\end{proof}

\subsection{Impact of High Mobility on Convergence Time}
\label{subsec:convergencetime}
As mentioned, the proposed parallel QL is particularly useful in vehicular networks where the system can simultaneously learn from experiences of multiple vehicles. Moreover, the high mobility of vehicles is also exploited in collecting new experiences in our proposed framework. In particular, the duration to the next decision epoch can be calculated as follows:
\begin{equation}
	\label{eq:timeepoch1}
	\xi_t = \frac{z}{v_{t}},
\end{equation}
where $v_t$ is the speed of the vehicle between the $t$-th and $(t+1)$-th decision epochs. After a period of $\xi_t$, the vehicle can collect a new sample of experience at decision epoch $(t+1)$-th, i.e., $(s^i_{t+1},a^i_{t+1}, r^i_{t+1},s^i_{t+2})$. From (\ref{eq:timeepoch1}), it is clear that with high speeds, the time to move to the next decision epoch of the vehicle will be short. As such, with a given time period, with higher speeds, the vehicle can collect more experiences. Thus, the parallel QL algorithm can converge to the optimal policy faster. In the simulations below, we can observe that the convergence time reduces from $6,000$s to $1,000$s when the speed increases from 2 m/s to 9 m/s.

\subsection{Complexity and Overhead of Parallel Q-Learning}

The proposed parallel QL algorithm is efficient with low computational complexity and memory complexity. As mentioned, the state space of our system includes only the current RSSI level, the current connected beam, and the vehicle's velocity. For that, in common mmWave vehicular networks setting with a few mmBSs, the number of states is small, and thus the size of the global Q-table is also small. Hence, the algorithm can obtain the optimal beam association strategy quickly as the lookup and update table processes are very fast.

Regarding the computational complexity, our proposed algorithm only performs basic calculations without any complex functions as in the other reinforcement learning algorithms, e.g., deep double QL, deep QL, and deep dueling~\cite{JSAC2},~\cite{JSAC}. These algorithms implement deep neural networks to approximate the Q-value function to obtain the optimal policy with complicated mathematical operations, e.g., multiply matrices and gradient descent. As a result, they require longer time to process and higher computing resources compared to our proposed parallel QL algorithm. In addition, in this work, we deploy only one Q-table at the eNodeB to store the Q-values for all state-action pairs instead of implementing a separated Q-tables on each vehicle with limited resources. As a result, the computing complexity is moved to the eNodeB which has sufficient resources to obtain the optimal policy in a short time.

Finally, our proposed parallel QL algorithm incurs minimal communication overhead. In particular, the RSSI level can be inferred by the mmBS through the signals received from the vehicle. Moreover, the mmBS can always know which beam the vehicle connected to. These information are sent to the eNodeB through the backhaul link with high bandwidth. Furthermore, in intelligent transport systems \cite{Standard}, the location information of vehicles is frequently reported to the RSU, i.e., mmBS. Thus, to update event $e_s$ of the SMDP, the eNodeB can collect the information of each vehicle through the mmBSs. Therefore, our proposed solution does not add additional overheads to the current ITS standards.
\section{Performance Evaluation}
\label{sec:evaluation}

\subsection{Parameter Setting}
We consider a road with a length of $1000$ meters in the coverage of an eNodeB. On the considered road, $10$ mmBSs are deployed. Each mmBS is equipped with $10$ orthogonal beams. Each beam is assumed to cover an area (on the considered road) ranging randomly from $20$ meters to $50$ meters. The blocking probability (including both temporary and permanent blockages) of each beam is generated randomly from $0$ to $1$. We define $10$ RSSI levels for each beam corresponding to $10$ data rates ranging from $0$ to $R_{\max}=9$ Gbps, i.e., $r_{n,k} \in \{0, 1, \ldots, 9\}$. Unless otherwise stated, at each decision epoch, a vehicle enters the road with probability $\lambda = 0.5$. The handover time is set at $0.5$ seconds~\cite{Zang2019Managing},~\cite{Park2018Handover}. $z$ is set at $5$ meters. The average vehicle speed is set at $7$ m/s (about 25 km/h), which is a typical vehicle urban speed~\cite{Khan2019Reinforcement}. During a decision epoch, the vehicle's speed remains unchanged. The average speed is then varied from $1$ m/s to $9$ m/s (about 4 km/h to 32 km/h) in several scenarios to demonstrate the effectiveness of the proposed solution under high mobility. It is important to note that the proposed parallel QL algorithm can learn without requiring these parameters in advance. Instead, the algorithm will learn them by interacting with the environment. For the proposed parallel QL and QL algorithms, the learning rate and discount factor are set at $0.1$ and $0.9$, respectively. Moreover, for the $\epsilon$-greedy method, the initial value of $\epsilon$ is set at $1$ and gradually reduced to $0.01$.

We compare our proposed algorithm with three other methods: (i) \textit{MaxRate}, (ii) \textit{Blockage Aware}, and (iii) \textit{Upper Bound}.

\begin{itemize}
	\item \textbf{MaxRate:} This scheme first explores all available beams at the current location. Then, the beam with the highest RSSI level will be selected to connect. Once the \textit{MaxRate} scheme selects the best beam, it will keep connecting to this beam until the end of the current decision epoch. This scheme is used to show the performance of non-adaptive and greedy solutions. As in mmWave systems the temporal degradation of the channel quality frequently occurs, the best beam at the current time may become worse later. Thus, this scheme results in poor system performance in terms of data rate, the number of handovers, and disconnection probability.
	
	\item \textbf{Blockage Aware:} This scheme is assumed to know the prior knowledge about the blocking probability of all available beams in the current location. Then, the beam with the lowest blocking probability will be chosen for the vehicle to connect. This scheme is used to show the effects of the high mobility. Specifically, when the vehicle speed is low, this scheme will achieve a good system performance in terms of data rate, the number of handovers, and disconnection probability. However, its performance becomes worse when the vehicle speed is high as the high mobility is not considered.

	\item \textbf{Upper Bound:} Similar to the \textit{Blockage Aware} scheme, this scheme is assumed to know the prior knowledge about the blocking probability of all available beams in the current location. However, the handover decision is only made if the data rate achieved is higher than that of staying in the current beam. Note that in practice, this data rate is not available in advance and depended on the channel quality. This scheme is adopted to show the optimistic upper bound of the system performance.
\end{itemize}
The evaluation metrics are the average data rate, the disconnection probability, and the number of handovers. The average data rate is defined as the data received (in bits) by a vehicle running on the road in a second. The number of handovers is defined as the total number of handovers that a vehicle needs to do when running on the road. The disconnection probability is defined as the average probability that the vehicle cannot communicate with mmBSs.

\subsection{Simulation Results}
\label{sec:evaluationB}
\paragraph{Performance Evaluation}

We first vary the average speed of vehicles running on the road and evaluate the system performance obtained by the parallel QL algorithm in terms of the average data rate, the disconnection probability, and the number of handovers as shown in Fig.~\ref{fig:varySpeed}. Obviously, the average data rate of the vehicle decreases when the vehicle speed increases as shown in Fig.~\ref{fig:varySpeed}(a). The reason is due to the effects of mobility on the handover. In particular, in the case that the vehicle chooses to handover to a new beam and its speed is high, the vehicle may move to another beam before the handover is finished. Moreover, at high speeds, the estimated RSSI level can be less accurate. This results in lower data rates. It is worth noting that by learning the environment parameters, our proposed solution achieves higher data rates than that of the \textit{MaxRate} scheme and the achieved rate is close to the \textit{Upper Bound} scheme. Moreover, when the vehicle speed is high (i.e., $\geq 7$ m/s), our proposed solution achieves a higher data rate than that of the \textit{Blockage Aware} scheme. The reason is that the \textit{Blockage Aware} scheme selects the beam with the lowest blocking probability at a given location without considering the vehicle speed. However, at high speeds, the beam with the lowest blocking probability may not be the best choice as the vehicle may move out of the beam before completing the handover process, and thus resulting in low data rates. As can be seen in Fig.~\ref{fig:varySpeed}(b), when the speed increases from 1 m/s to 6 m/s, our proposed solution chooses to reduce the number of handovers to avoid the negative effect of mobility. In contrast, other solutions with fixed policies cannot learn this information, and thus they do not reduce the number of handovers, except the \textit{Upper Bound} scheme which is assumed to know the actual data rate before connecting to a beam. It is worth noting that, when the speed is higher than $7$ m/s (about 25 km/h), the number of handovers slightly increases for all schemes. This is because at high speeds, the vehicle moves out the coverage of a beam before finishing the handover to connect to this beam, thereby it needs to do the handover again. It is important to note that, at several locations, the only option is to handover to a new beam. Thus, the number of handovers slightly increases for our proposed solution in this case. Finally, as shown in Fig.~\ref{fig:varySpeed}(c), the average disconnection probabilities of all schemes increase when the vehicle's speed increases. The reason is due to the effects of the high mobility on the handover process. Note that by using the learning algorithm, our proposed solution achieves lower disconnection probability than that of the \textit{MaxRate} scheme and the probability is very close to that of the \textit{Upper Bound} scheme, especially at high speeds.

\begin{figure*}[!]
	\centering
	\begin{subfigure}[b]{0.33\textwidth}
		\centering
		\includegraphics[scale=0.3]{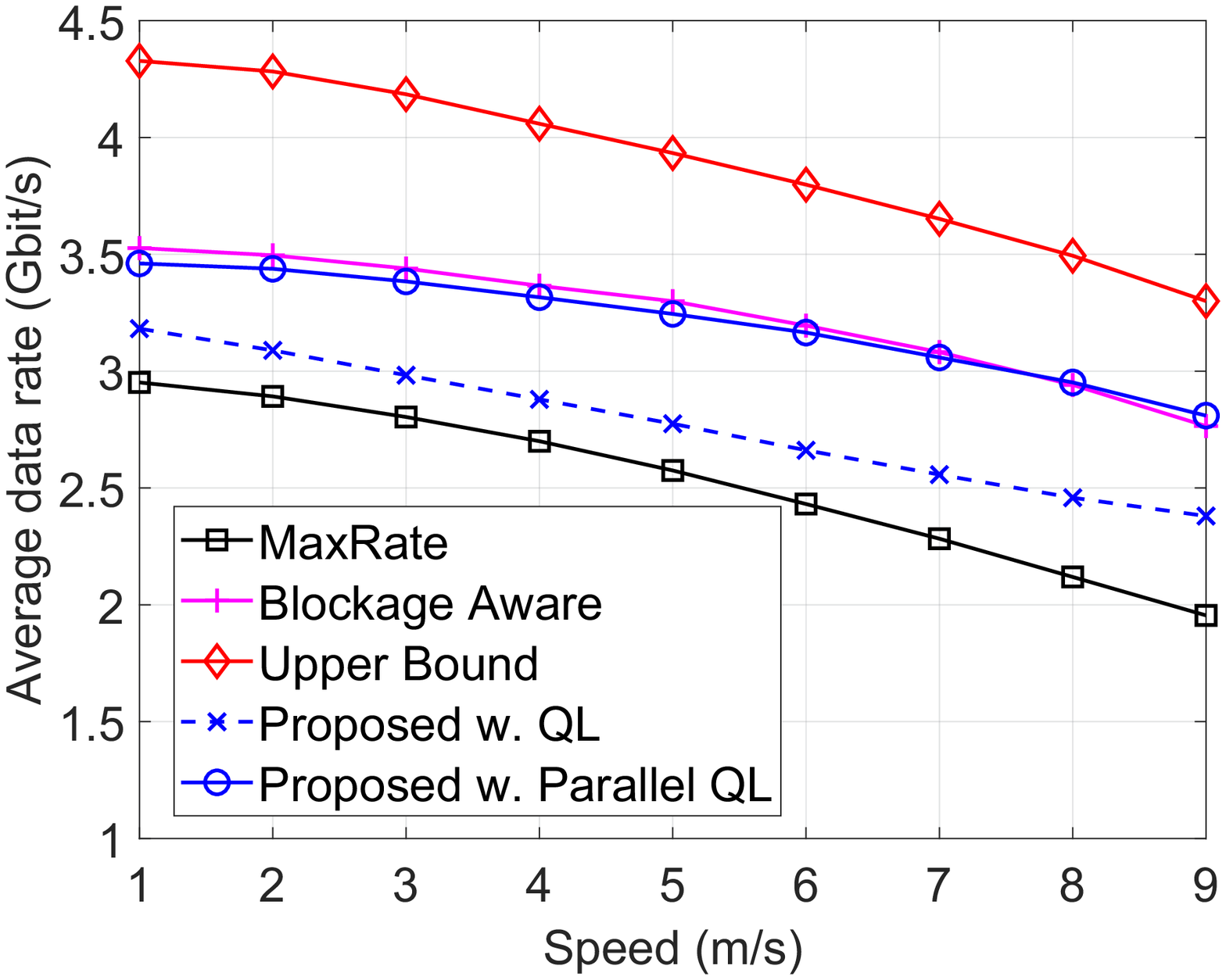}
		\caption{}
	\end{subfigure}%
	~
	\begin{subfigure}[b]{0.33\textwidth}
		\centering
		\includegraphics[scale=0.3]{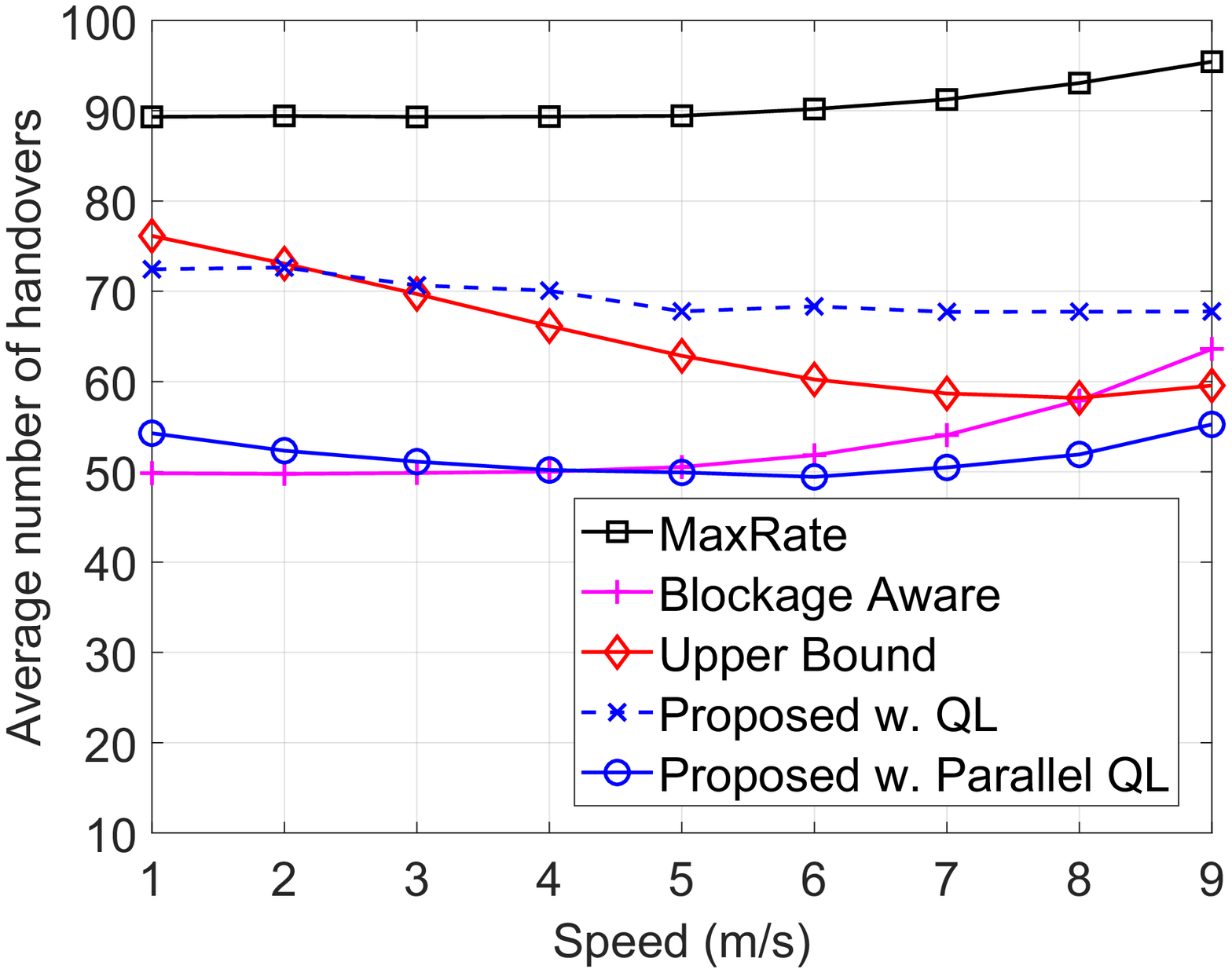}
		\caption{}
	\end{subfigure}%
	~
	\begin{subfigure}[b]{0.33\textwidth}
		\centering
		\includegraphics[scale=0.3]{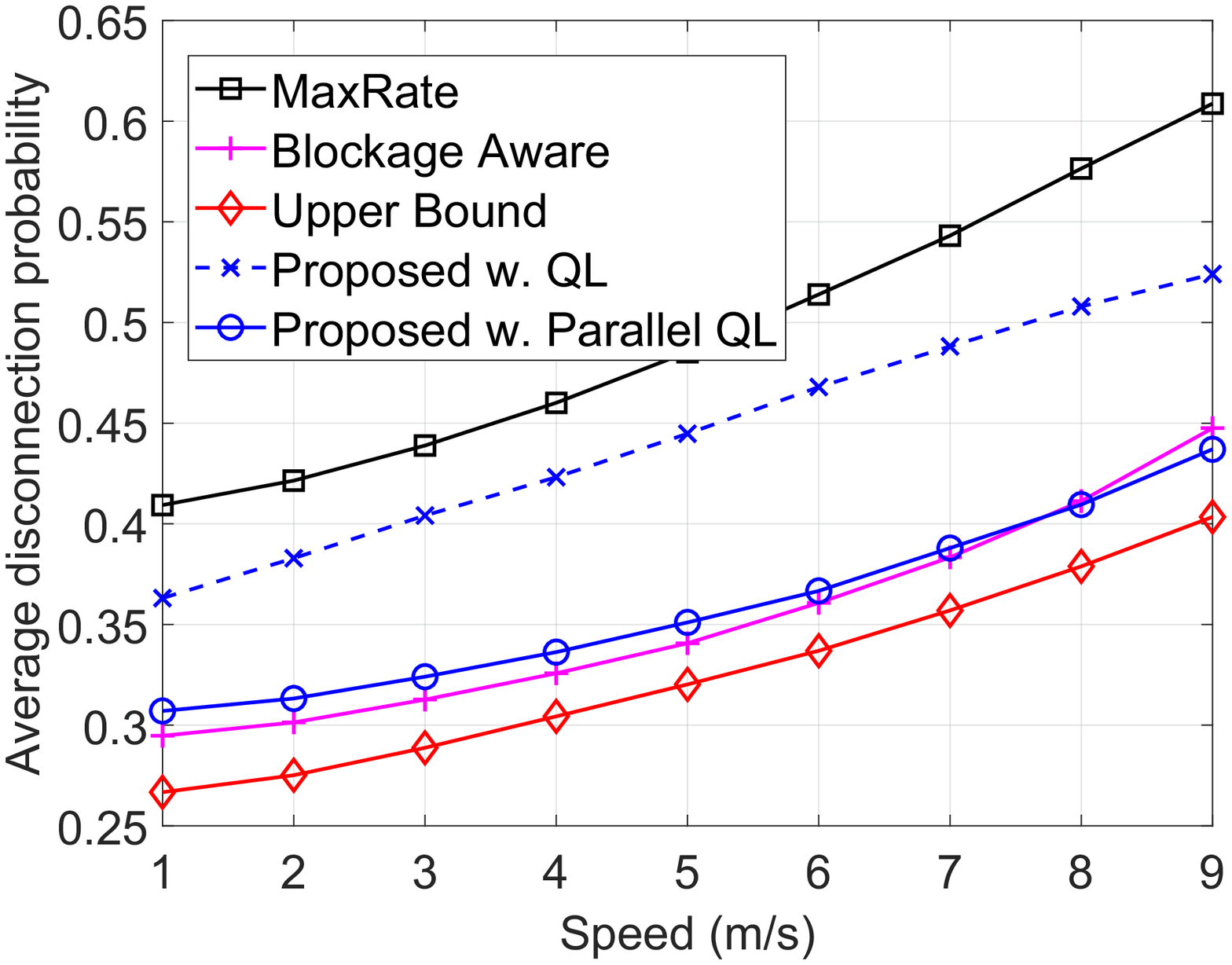}
		\caption{}
	\end{subfigure}%
	
	\caption{(a) Average data rate (Gbit/s), (b) average number of handovers, and (c) average disconnection probability vs. average speed of vehicles.} 
	\label{fig:varySpeed}
\end{figure*}

\begin{figure*}[!]
	\centering
	\begin{subfigure}[b]{0.33\textwidth}
		\centering
		\includegraphics[scale=0.3]{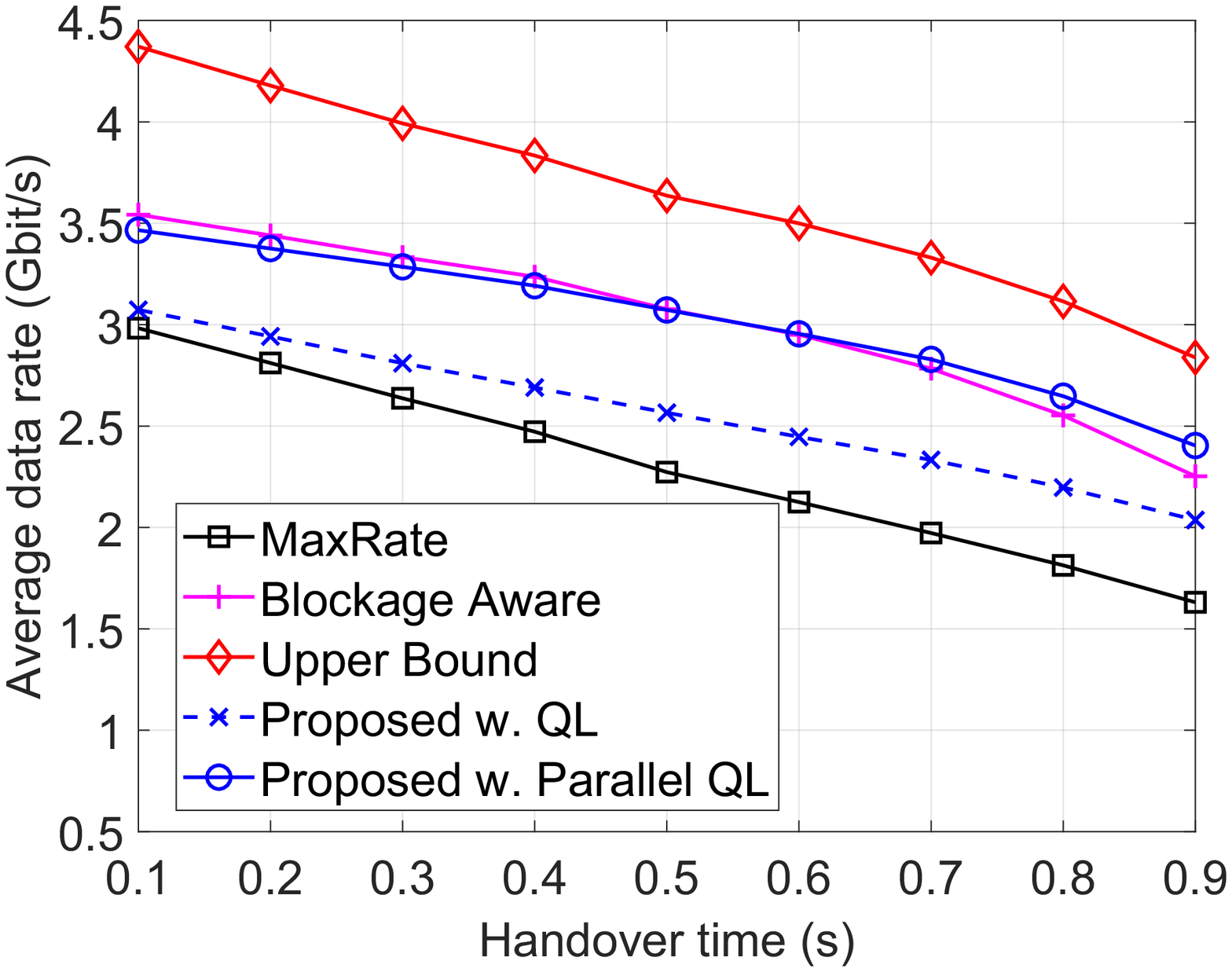}
		\caption{}
	\end{subfigure}%
	~
	\begin{subfigure}[b]{0.33\textwidth}
		\centering
		\includegraphics[scale=0.3]{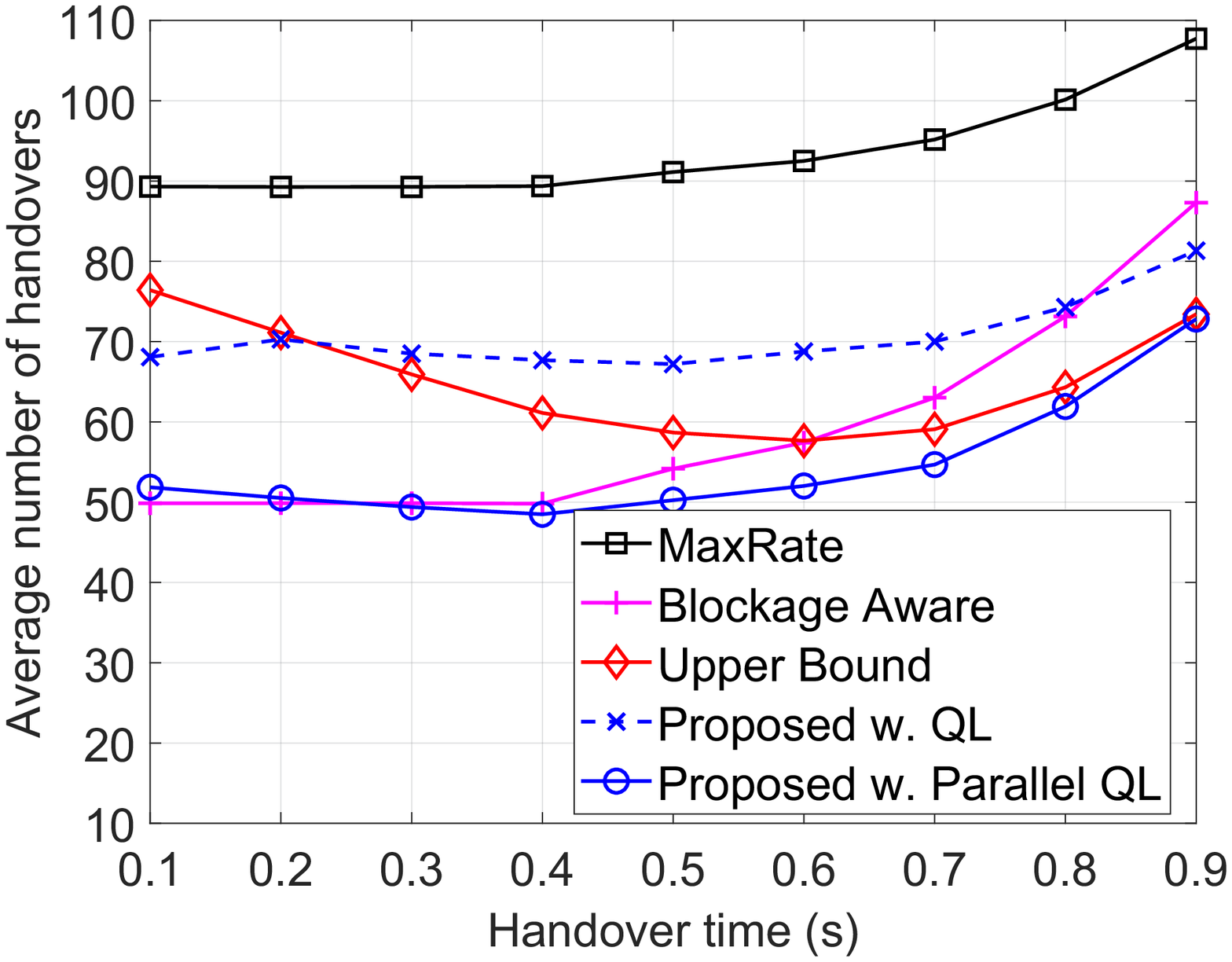}
		\caption{}
	\end{subfigure}%
	~
	\begin{subfigure}[b]{0.33\textwidth}
		\centering
		\includegraphics[scale=0.3]{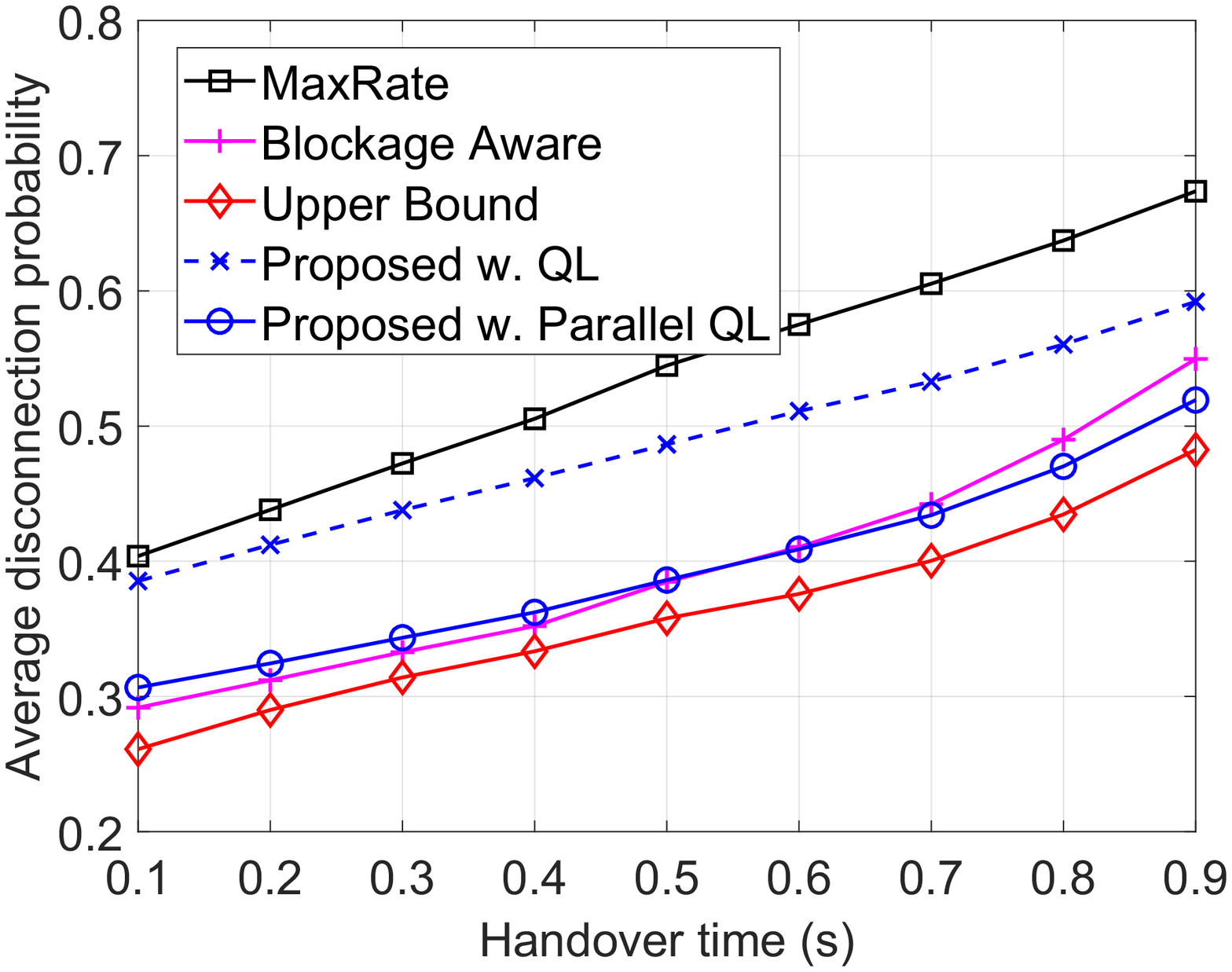}
		\caption{}
	\end{subfigure}%
	
	\caption{(a) Average data rate (Gbit/s), (b) average number of handovers, and (c) average disconnection probability vs. $h$.} 
	\label{fig:varyHandovertime}
\end{figure*}

Next, we fix the average speed of vehicles at $7$ m/s (about 25 km/h), which is a typical vehicle urban speed~\cite{Khan2019Reinforcement} and vary the time for the handover to show the average data rate, number of handovers, and disconnection probability obtained by the proposed parallel QL algorithm as shown in Fig.~\ref{fig:varyHandovertime}. Clearly, when the time for the handover increases, the disconnection probability increases, and thus the average data rates of all solution decrease as shown in Fig.~\ref{fig:varyHandovertime}(a) and Fig.~\ref{fig:varyHandovertime}(c). Again, the proposed solution possesses better performance in terms of data rate and disconnection probability than those of the \textit{MaxRate} scheme and close to that of the \textit{Upper Bound} solution. The reason is that our proposed solution can learn and minimize the number of handovers when the time for the handover increases. In particular, as shown in Fig.~\ref{fig:varyHandovertime}(b), when the time for handover increases from $0.1$ seconds to $0.4$ seconds, the number of handovers of the proposed solution decreases as our learning algorithm adapts its policy to minimize the number of the handovers and thus maximize the data rate for the vehicle (when the environment parameters are not available in advance). However, when the time for the handovers is too long (i.e., $\geq 0.5$), the number of handovers increases. Similar to the previous scenario, when the handover is too long, the vehicle may move out of the target beam, and thus it needs to do the handover again. It is worth noting that at high speeds, our proposed solution performs betters than the \textit{Blockage Aware} scheme with prior knowledge about the beam profiles. The reason is that the \textit{Blockage Aware} scheme selects a beam based on its blocking probability without considering the handover process.

\begin{figure*}[!]
	\centering
	\begin{subfigure}[b]{0.33\textwidth}
		\centering
		\includegraphics[scale=0.3]{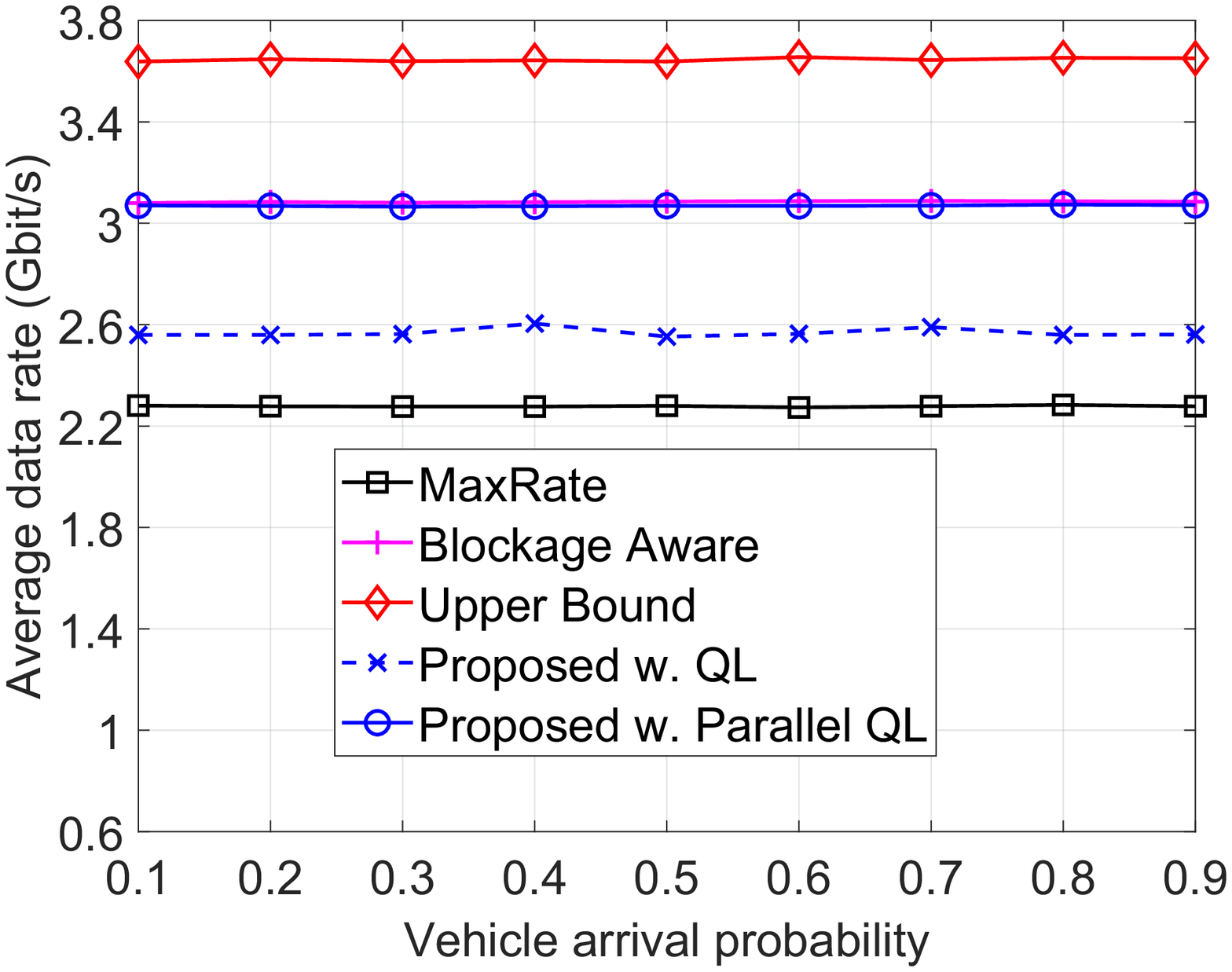}
		\caption{}
	\end{subfigure}%
	~
	\begin{subfigure}[b]{0.33\textwidth}
		\centering
		\includegraphics[scale=0.3]{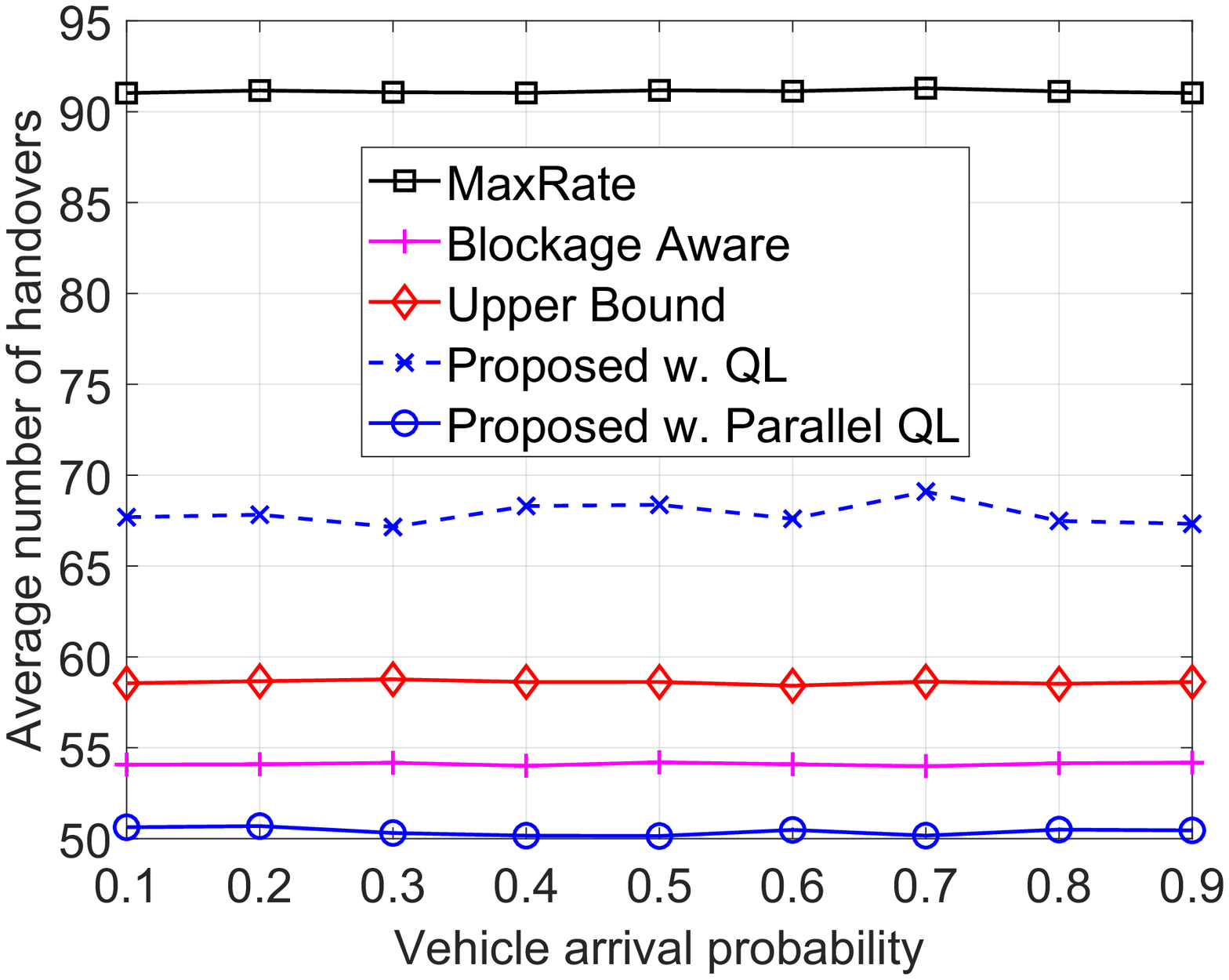}
		\caption{}
	\end{subfigure}%
	~
	\begin{subfigure}[b]{0.33\textwidth}
		\centering
		\includegraphics[scale=0.3]{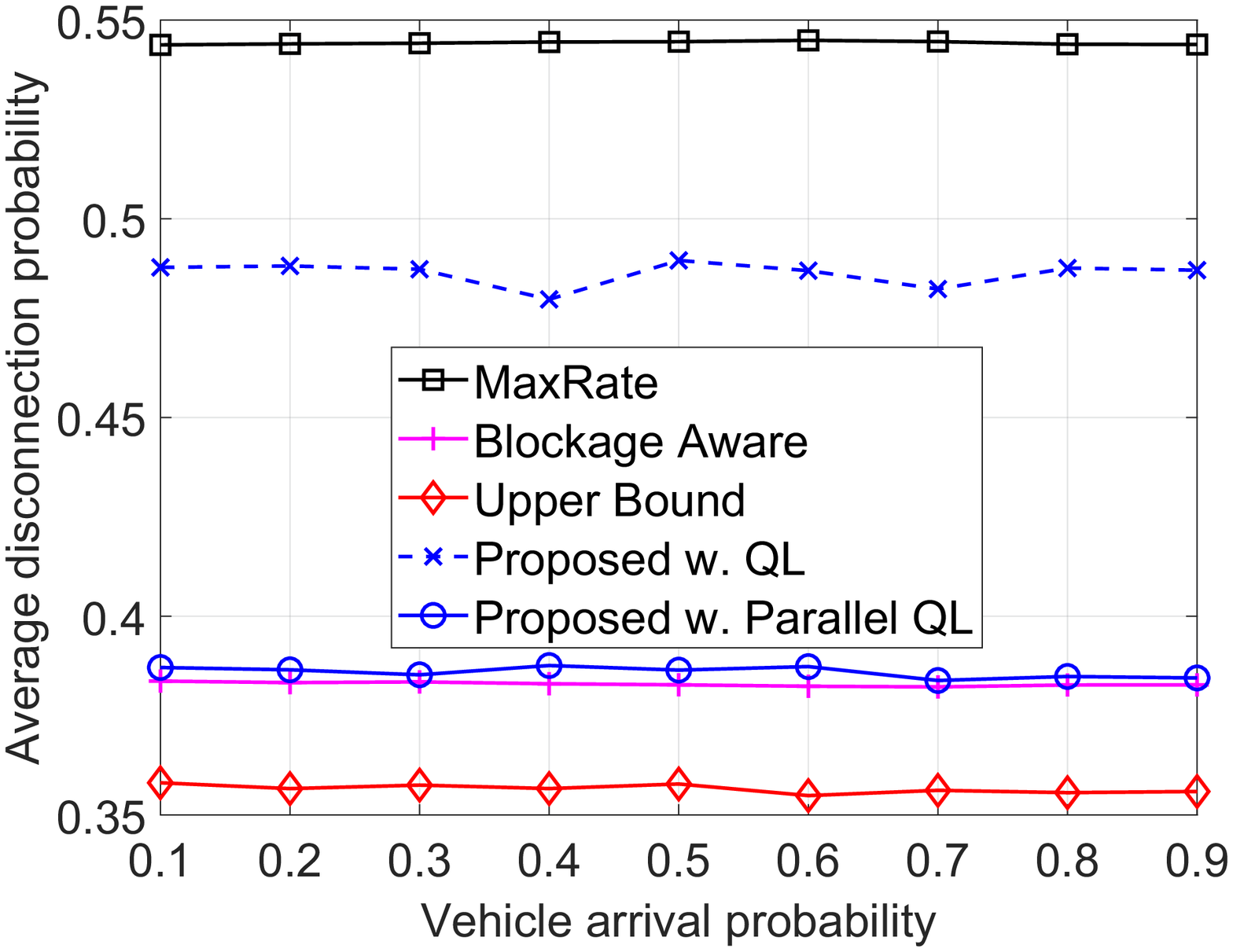}
		\caption{}
	\end{subfigure}%
	
	\caption{(a) Average data rate (Gbit/s), (b) average number of handovers, and (c) average disconnection probability vs. $\lambda$.} 
	\label{fig:varyCarArrival}
\end{figure*}

Finally, in Fig.~\ref{fig:varyCarArrival}, we vary the probability that a vehicle enters the road at each decision epoch to evaluate the performance of the proposed solution. Similar to other scenarios, by learning the environment parameters, i.e., blocking probability, vehicle's speed, and handover time, our proposed solution achieves better performance than that of the \textit{MaxRate} and \textit{Blockage Aware} schemes in terms of data rate, number of handovers, and disconnection probability.

\paragraph{Convergence}
\begin{figure}[h]
	\centering
	\includegraphics[scale=0.33]{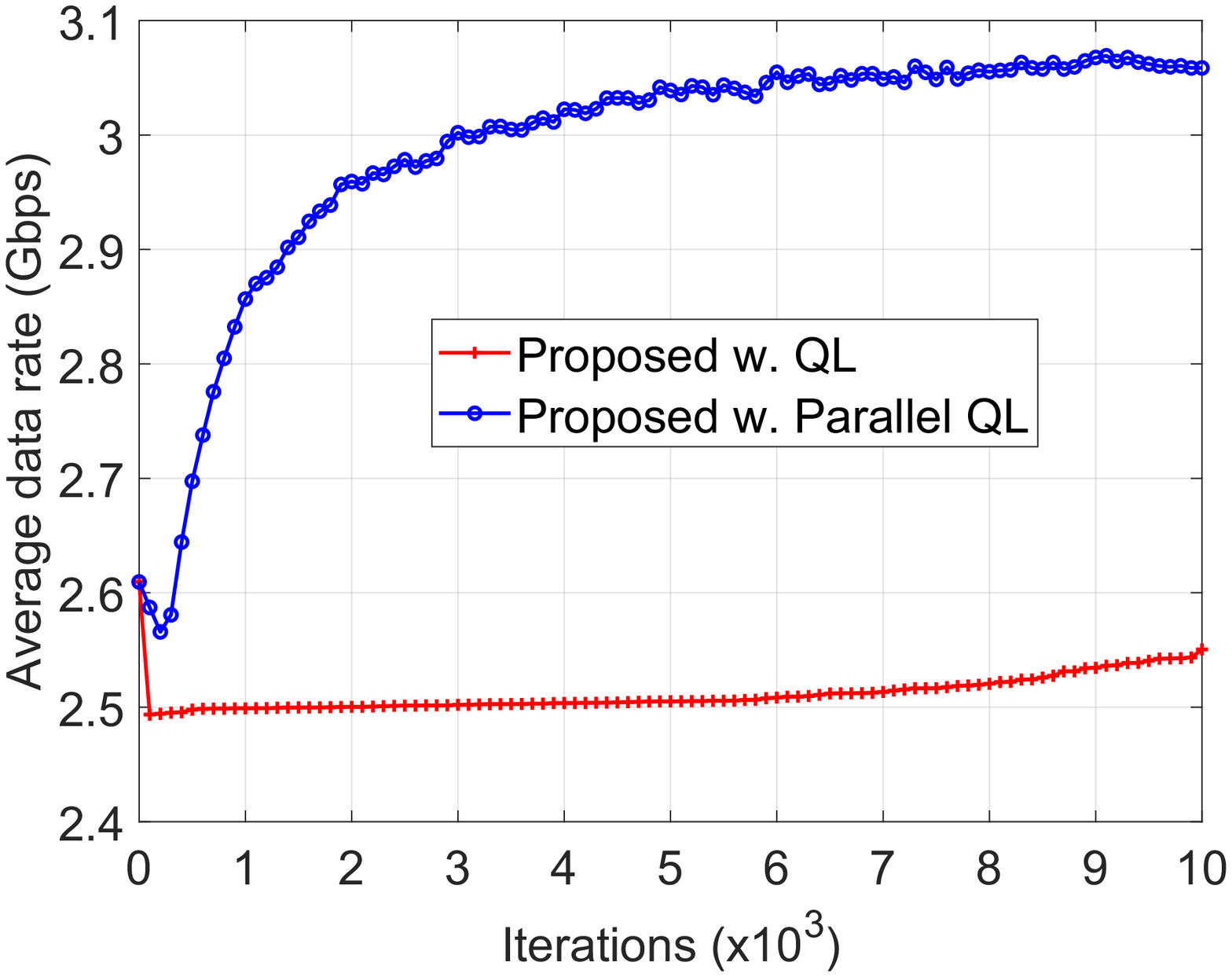}
	\caption{Convergence rates of QL and parallel QL algorithms.}
	\label{fig:convergence10}
\end{figure}

In Fig.~\ref{fig:convergence10}, we evaluate the convergence rates of the proposed parallel QL and the QL algorithms. Obviously, the parallel QL algorithm can obtain the optimal beam association policy within $6,000$ iterations while the QL algorithm cannot converge to the optimal beam association policy after $10,000$ iterations. This result confirms the analysis in Section~\ref{subsec:parallelQ}. Specifically, by learning from multiple vehicles on the road at the same time, the proposed parallel algorithm has more experiences to learn and quickly converge to the optimal policy.

\begin{figure}[h]
	\centering
	\includegraphics[scale=0.33]{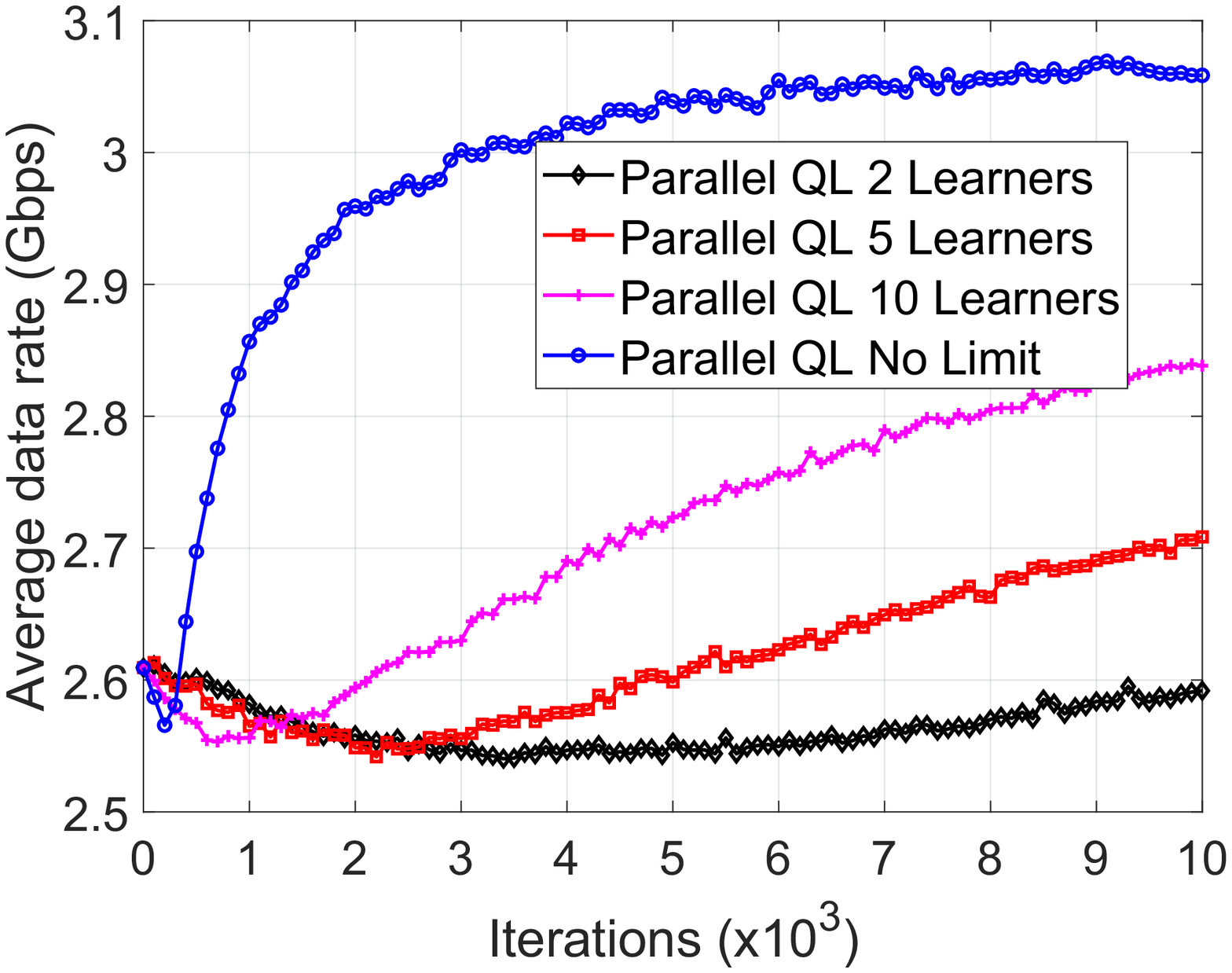}
	\caption{Convergence rates of the parallel QL algorithm with different numbers of learners.}
	\label{fig:compare}
\end{figure}

Next, in Fig.~\ref{fig:compare}, we compare the convergence rates of the parallel QL with different numbers of learners. Clearly, the higher number of learners results in better performance. In particular, when the parallel QL runs with only 2 learners, the performance is the worst. When the number of learners increases to 10, the performance of the algorithm is improved. Finally, in the case that we do not limit the number of learners (i.e., learning from all the vehicles running on the road), the algorithm achieves the best performance and quickly convergences to the optimal beam association policy. This implies that by leveraging the fact that there are often numerous vehicles on the road, our proposed parallel QL algorithm can significantly improve the system performance compared to conventional methods.

\begin{figure}[h]
	\centering
	\begin{subfigure}[b]{0.33\textwidth}
		\centering
		\includegraphics[scale=0.33]{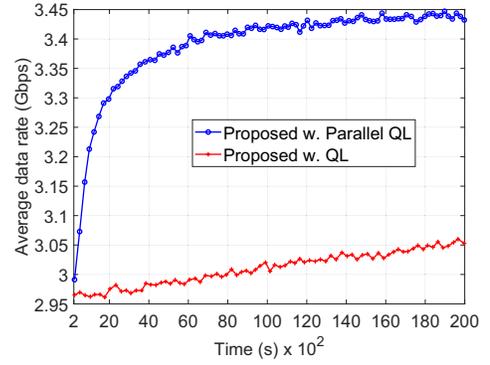}
		\caption{}
	\end{subfigure}%
	
	\begin{subfigure}[b]{0.33\textwidth}
		\centering
		\includegraphics[scale=0.33]{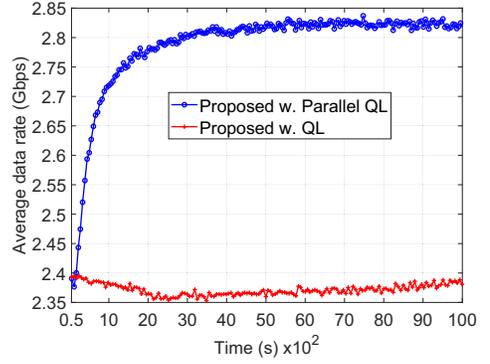}
		\caption{}
	\end{subfigure}	
	\caption{Convergence rates of the QL and parallel QL algorithms when the average speed of vehicles is (a) 2 m/s and (b) 9 m/s.} 
	\label{fig:ConvergenceSpeed}
\end{figure}

Next, we investigate the convergence time of the proposed parallel QL and the QL algorithms when the average speed of vehicles is low (2 m/s) and high (9 m/s) as shown in Fig.~\ref{fig:ConvergenceSpeed}(a) and Fig.~\ref{fig:ConvergenceSpeed}(b), respectively. As discussed in Section~\ref{subsec:convergencetime}, increasing the speed of vehicles leads to better convergence time of the algorithm as vehicles can collect more experiences for the learning process. As shown in Fig.~\ref{fig:ConvergenceSpeed}(a), when the speed of vehicles is 2 m/s, the parallel QL algorithm requires at least $14,000$ seconds to obtain the optimal beam association/handover policy. In contrast, when the speed of vehicles is 9 m/s, the algorithm can obtain the optimal association policy within $3,000$ seconds as shown in Fig.~\ref{fig:ConvergenceSpeed}(b). Meanwhile, the QL algorithm still cannot converge to the optimal policy after $20,000$ seconds and $10,000$ seconds when the vehicle speed is 2 m/s and 9 m/s, respectively. Note that the average data rate (after obtaining the optimal beam association strategy) achieved by the proposed solution in the case the speed of vehicles is 9 m/s lower than that of the case when the speed of vehicles is 2 m/s. As mentioned, this is stemmed from the effects of mobility on the handover process as discussed in the above section.

\begin{figure}[h]
	\centering
	\includegraphics[scale=0.33]{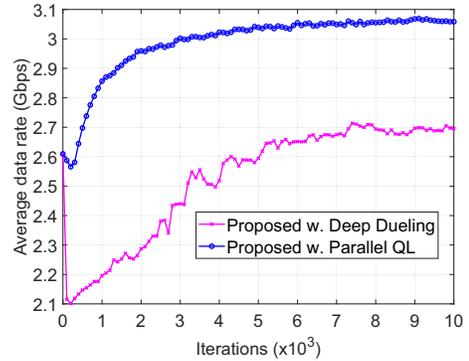}
	\caption{Convergence rates of parallel QL and deep dueling algorithms.}
	\label{fig:compareDueling}
\end{figure}

Finally, in Fig.~\ref{fig:compareDueling}, we compare the convergence rates of the proposed algorithm and the latest advance in deep reinforcement learning, i.e., deep dueling algorithm. In particular, the deep dueling reinforcement learning algorithm implements two flows of hidden layers to separately estimate the advantage and value functions~\cite{JSAC2},~\cite{JSAC}. Recent works demonstrated that the deep dueling algorithm is superior to other deep reinforcement learning algorithms, e.g., deep QL and deep double QL~\cite{JSAC2},~\cite{JSAC}. As shown in Fig.~\ref{fig:compareDueling}, our proposed solution can obtain the optimal beam association strategy within $6,000$ iterations while the deep dueling cannot converge to the optimal solution after $10,000$ iterations. The reason is that our proposed parallel QL algorithm can learn from multiple vehicles running on the road simultaneously. In contrast, the deep dueling algorithm only learns from a single vehicle at a time resulting in poor performance.

\section{Conclusion}
\label{sec:conclusion}
In this paper, we have developed an optimal beam association framework for high mobility mmWave vehicular networks, aiming to maximize the system performance in terms of average data rates, number of handovers, and disconnection probability of vehicles. The proposed parallel QL algorithm leverages the inherent feature of vehicular networks that there are usually multiple vehicles on the road. By collecting experiences/samples simultaneously from all vehicles, the algorithm converges to the optimal policy much faster than the conventional QL and even its latest deep reinforcement learning framework. Extensive simulations have proved that our proposed parallel QL algorithm can increase the average data rate by 47\% and reduce the disconnection probability by 29\% compared to the conventional solution. In addition, by learning the RSSI profiles of beams and blockages on the road, our proposed solution can achieve the performance close to that of the hypothetical scheme which requires complete environment information in advance. We also observed that the high mobility of the vehicles was actually helpful in speeding up the convergence of the algorithm to the optimal policy.

\appendices

\section{The proof of Lemma~\ref{Lem:exist}}
\label{app:reward}
First, we define a sequence of matrices as $\{A_n: n \geq 0\}$. If $\lim\limits_{n \rightarrow \infty} A_n(s'|s)=(s'|s), \forall (s,s') \in \mathcal{S} \times \mathcal{S}$, we have $\lim\limits_{n \rightarrow \infty} A_n=A$. Now, we define the Cesaro limit (denoted by C-lim)~\cite{Puterman_1994_Book} of the sequence as follows:
\begin{equation}
C\mbox{-}\!\!\lim\limits_{\!\!\!N \rightarrow \infty} = \lim\limits_{N \rightarrow \infty}\frac{A_0+ A_1+\ldots+A_n+\ldots+A_{N-1}}{N}.
\end{equation}
Thus, $A$ is the Cesaro limit (of order one) of $\{A_n: n \geq 0\}$ if
\begin{equation}
\lim\limits_{N \rightarrow \infty}\frac{1}{N}\sum_{n=0}^{N-1}A_n=A.
\end{equation}
In a short form, we have $C\mbox{-}\!\!\lim\limits_{\!\!\!N \rightarrow \infty}A_N=A$. The limiting matrix $\overline{\mathcal{L}}$ is then formulated as follows:
\begin{equation}
\overline{\mathcal{L}}=C\mbox{-}\!\!\lim\limits_{\!\!\!N \rightarrow \infty}\mathcal{L}^N.
\end{equation}
Let's denote $\overline{l}(s'|s)$ as the $(s'|s)$-th element of $\overline{\mathcal{L}}$. Thus, for each $s$ and $s'$, we have the following:
\begin{equation}
\overline{l}(s'|s) = \lim_{N \rightarrow \infty}\frac{1}{N}\sum_{n=1}^{N}l^{n-1}(s'|s),
\end{equation}
where $l^0(s'|s)$ denotes a element of an $\mathcal{S}\times\mathcal{S}$ identity matrix, and $l^{n-1}$ is a component of $\mathcal{L}^{n-1}$. Given $\mathcal{L}$ is aperiodic, $\lim_{N \rightarrow \infty}\mathcal{L}^N$ equals to $\overline{\mathcal{L}}$. Thus, the limiting matrix exists.
\section{The proof of Theorem~\ref{theo:limitexists}}
\label{appendix:recurrent}
In this proof, we first show that the underlying Markov chain is irreducible. In particular, we will prove that the learning process can move from a given state to any states after a finite number of steps. As mentioned, the system state space $\mathcal{S}$ is the combination of the RSSI level, the connected beam, and the velocity of the current vehicle. At state $s=(l, b_{n,k}, v, d)$, if the vehicle to connect to beam $b_{n',k'}$ and the RSSI level when connect to this beam is $l'$, the system moves to state $s'=(l',b_{n',k'}, v, d)$. The new RSSI level $l'$ can be any of levels in $\mathcal{R}$ as the RSSI level depends on the environmental conditions, e.g., channel conditions, and the blockage probability. In addition, the vehicle can be able to connect to all beams when it is moving on the road. When the vehicle moves out of the considered road, the system will wait for a new vehicle enters the road and move to a new state. Moreover, the velocity (i.e., speed and direction) of the vehicle is not fixed. Thus, from a given state $s$, the system can move to any other state $s' \in \mathcal{S}$ after a finite number of steps. In other words, the state space $\mathcal{S}$ (which is the combination of the RSSI level, the connected beam, and the velocity of the current vehicle) contains only one communicating class, and the underlying Markov chain is irreducible. As such, the long-term average date rate $\mathcal{R}(\pi)$ does not depend on the initial state and is well defined $\forall \pi$~\cite{CompetitiveBook}. Thus, the algorithm can converge to the optimal association policy regardless of the initial system state.

\section{The proof of Theorem~\ref{theo:convergeQ}}
\label{appendix:convergeQ}
In this proof, we show that the proposed parallel QL algorithm is ensured to converge to the optimal policy, i.e., $\mathcal{Q}_t(s,a) \rightarrow \mathcal{Q}^*(s,a)$ as $t \rightarrow \infty$. As mentioned in Section~\ref{subsec:parallelQ}, the learning processes in our proposed algorithm are serializable. Thus, the convergence proof of the parallel QL is similar to that of the QL algorithm.

The key idea of this proof is using the action-replay process (ARP) (i.e., an artificial controlled Markov decision process)~\cite{Watkins1992QLearning}. This action-relay process is defined based on the episode sequence and the learning rate. First, we denote $\{\langle s,t\rangle \}$ as the state space of the ARP~\cite{Watkins1992QLearning}. Here, $s$ is a state in the actual process, $t \geq 1$ denotes the ARP's level. In addition, the action space of the ARP is denoted as $\{a\}$ in which $a$ is a action in the actual process. Next, at state $\langle s,t\rangle$, if action $a$ is chosen, the state transition consequence and the stochastic reward of the ARP can be formulated as follows:
\begin{equation}
\mathbf{i_*}	=	\left\{	\begin{array}{ll}
\argmax_i\{t^i \le t\},	&	\mbox{if $(s,a)$ has been taken before}\\
&\mbox{decision epoch t},	\\
0,	&	\mbox{otherwise},
\end{array}	\right.
\end{equation}
where $t^i$ represents the $i^{th}$ time when performing action $a$ given state $s$. As such, $t^{i_*}$ is the last time at which action $a$ is taken at state $s$ in the real process before decision epoch $t$. The reward equals to $\mathcal{Q}_0(s,a)$ if $i_*=0$. Moreover, in this case, the action-replay process is absorbed. Otherwise, we denote the index of the decision epoch which is taken from the existing samples from the real process as follows:
\begin{equation}
\mathbf{i_e}	=	\left\{	\begin{array}{ll}
 i_*,	&	\mbox{with probability $\tau_{t^{i_*}}$},	\\
 i_*-1,	&	\mbox{with probability $(1-\tau_{t^{i_*}})\tau_{t^{i_*-1}}$},	\\
 i_*-2,	&	\mbox{with probability $(1-\tau_{t^{i_*}})(1-\tau_{t^{i_*-1}})\tau_{t^{i_*-2}}$},	\\
& \vdots\\
 0,	&	\mbox{with probability $\prod_{i=1}^{i_*}(1-\tau_{t^i})$},
\end{array}	\right.
\end{equation}
Similar as above, when $i_e=0$, the reward is $\mathcal{Q}_0(s,a)$ and the process is absorbed. Otherwise, taking $i_e \# 0$ results reward $r_{t^{i_e}}$ and a state transition to $\langle {s'}_{t^{i_e}},t^{i_e} - 1\rangle$.

Putting the above and Lemma B in~\cite{Watkins1992QLearning} together, we have $\mathcal{Q}_t(s,a) \rightarrow \mathcal{Q}^*_{ARP}(\langle s, t\rangle, a), \forall a, s,$ and $t \geq 0$, in which $\mathcal{Q}_t(s,a)$ is the optimal action values of the ARP with state $\langle s, t\rangle$ and action $a$~\cite[Lemma A]{Watkins1992QLearning}. Let's denote $r^*$ as the bound of the reward, and thus $r* \geq |r_t|, \forall t$. With loss of generality, assuming that $\mathcal{Q}_t(s,a) < \frac{r^*}{(1-\gamma)}$ with $r^* \geq 1$~\cite{Watkins1992QLearning}. Thus, with $\chi > 0$, we can find $\xi$ so that $\gamma^\xi \frac{r*}{1-\gamma} < \frac{\chi}{6}.$

By using Lemma B.4 in~\cite{Watkins1992QLearning}, the comparison of between the value of performing $a_1, \ldots, a_\xi$ in the real process, i.e., $\bar{\mathcal{Q}}(s, a_1, \ldots, a_\xi)$, with that of taking these actions in the ARP, i.e., $\bar{\mathcal{Q}}_{ARP}(\langle s,t \rangle, a_1, \ldots, a_\xi)$, is formulated as follows:
\begin{equation}
\label{eq:difference}
\begin{aligned}
&|\bar{\mathcal{Q}}_{ARP}(<s,t>, a_1, \ldots, a_\xi)-\bar{\mathcal{Q}}(s, a_1, \ldots, a_\xi)| < \\
& \frac{\chi(1-\gamma)}{6\xi r*}\frac{2\xi r*}{1-\gamma} + \frac{2\chi}{3\xi(\xi + 1)}\frac{\xi (\xi + 1)}{2} = \frac{2 \chi}{3}.
\end{aligned}
\end{equation}
Based on Lemma B.4 in~\cite{Watkins1992QLearning}, we can say that taking only $\xi$ actions results in a small different of less than $\frac{\chi}{6}$ for both the real process and the action-replay process. In addition, we can apply (\ref{eq:difference}) to an set of actions in both the action-replay and the real processes. As such, $\mathcal{Q}^*_{ARP}(\langle s, t\rangle,a) - \mathcal{Q}^*(s,a)| < \chi$. Thus, $\mathcal{Q}_t(s,a) \rightarrow \mathcal{Q}^*(s,a)$ when $t \rightarrow \infty$ with probability 1.


\end{document}